\newcommand{\ie}{\emph{i.e.}}
\newcommand{\eg}{\emph{e.g.}}
\newcommand{\cf}{\emph{cf.}}
\newcommand{\Real}{\mathbb{R}}
\newcommand{\Nat}{\mathbb{N}}
\newcommand{\Int}{\mathbb{Z}}
\newcommand{\Dom}{\mathsf{D}}
\newcommand{\diag}{\mathop{\mathrm{diag}}\nolimits}
\newcommand{\sii}{L^2}
\newcommand{\der}{\mathrm{d}}
\newcommand{\ce}{\mathop{\mathrm{ce}}\nolimits}
\newcommand{\se}{\mathop{\mathrm{se}}\nolimits}
\newcommand{\subalign}[1]{%
  \vcenter{%
    \Let@ \restore@math@cr \default@tag
    \baselineskip\fontdimen10 \scriptfont\tw@
    \advance\baselineskip\fontdimen12 \scriptfont\tw@
    \lineskip\thr@@\fontdimen8 \scriptfont\thr@@
    \lineskiplimit\lineskip
    \ialign{\hfil$\m@th\scriptstyle##$&$\m@th\scriptstyle{}##$\crcr
      #1\crcr
    }%
  }
}
\newtheorem{Theorem}{Theorem}
\newtheorem{Proposition}{Proposition}
\newtheorem{Corollary}{Corollary}
\theoremstyle{definition}
\newtheorem{Remark}{Remark}
\definecolor{DarkBlue}{rgb}{0,0.1,0.7}
\definecolor{DarkRed}{rgb}{0.8,0.1,0.1}
\newcommand\soutD{\bgroup\markoverwith
{\textcolor{DarkBlue}{\rule[.01ex]{2pt}{1pt}}}\ULon}
\newcommand{\Hm}[1]{\leavevmode{\marginpar{\tiny%
$\hbox to 0mm{\hspace*{-0.5mm}$\leftarrow$\hss}%
\vcenter{\vrule depth 0.1mm height 0.1mm width \the\marginparwidth}%
\hbox to
0mm{\hss$\rightarrow$\hspace*{-0.5mm}}$\\\relax\raggedright #1}}}
\begin{document}
%
\title{\textbf{\LARGE Effective quantum dynamics on the M\"obius strip}}
\author{Tom\'a\v{s} Kalvoda,$^a$ \
David Krej\v{c}i\v{r}{\'\i}k\,$^b$
\ and \ Kate\v{r}ina Zahradov\'a\,$^c$}
\date{\small 
\begin{quote}
\emph{
\begin{itemize}
\item[$a)$]  
Department of Applied Mathematics, 
Faculty of Information Technology,
Czech Technical University in Prague,
Th\'akurova 9, 16000 Prague 6, Czechia; 
tomas.kalvoda@fit.cvut.cz.%
\item[$b)$] 
Department of Mathematics, Faculty of Nuclear Sciences and 
Physical Engineering, Czech Technical University in Prague, 
Trojanova 13, 12000 Prague 2, Czechia;
david.krejcirik@fjfi.cvut.cz.%
\item[$c)$] 
School of Mathematical Sciences, Queen Mary University of London, London E  4NS, United Kingdom;
k.zahradova@qmul.ac.uk.%
\end{itemize}
}
\end{quote}
\today}
\maketitle
\begin{abstract}
\noindent
The Laplace--Beltrami operator in the curved M\"obius strip
is investigated in the limit when the width of the strip tends to zero.
By establishing a norm-resolvent convergence,
it is shown that spectral properties of the operator are approximated well
by an unconventional flat model whose spectrum can be computed explicitly
in terms of Mathieu functions.
Contrary to the traditional flat M\"obius strip,
our effective model contains a geometric potential. 
A comparison of the three models is made
and analytical results are accompanied by numerical computations.
\bigskip
\begin{itemize}
\item[\textbf{Keywords:}] Möbius strip, Laplace--Beltrami operator, spectrum, effective Hamiltonian, resolvent convergence,
quantisation on submanifolds.
\end{itemize}
\end{abstract}
%
 
\section{Introduction}
%
The unorientable nature of the M\"obius strip 
has fascinated scientists as well as laypeople
since its discovery in the nineteenth century
(see~\cite{Pickover} for a popular overview),
or perhaps even before (\cf~\cite{Cartwright-Gonzalez_2016}).
In this paper we are interested in the interplay between
the peculiar geometry of the M\"obius strip 
and its physical properties quantified by spectral data,
which seems to have escaped the attention of the scientific community so far.  

\paragraph{The true model.}
We start with the traditional geometric realisation
of the M\"obius strip as a two-dimensional ruled surface
built along a circle of radius $R>0$ in~$\Real^3$:
\begin{multline}
\label{true}
  \Omega := \Big\{
  \left( 
  \left[R-t \cos\left(\frac{s}{2R}\right)\right] 
  \cos\left(\frac{s}{R}\right), 
  \left[R-t \cos\left(\frac{s}{2R}\right)\right] 
  \sin\left(\frac{s}{R}\right), 
  - t \sin\left(\frac{s}{2 R}\right)
  \right)
  \\
  : \ s \in [0,2\pi R) \,, \ t \in (-a,a) 
  \Big\}
  \,,
\end{multline}
where $a \in (0,R)$ is the half-width of the strip,
see Figure~\ref{figMobius}.
Notice that the surface~$\Omega$ is not orientable due to the division
of the angle~$s/R$ by the factor two, 
while the strip is still well ``glued together'' at the endpoints
corresponding to $s=0$ and $s=2\pi R$.
We consider the self-adjoint operator  
\begin{equation*}
  -\Delta_D^\Omega
  \qquad \mbox{in} \qquad
  \sii(\Omega)
  \,,
\end{equation*}
which acts as Laplace--Beltrami operator in~$\Omega$
and satisfies Dirichlet boundary conditions on~$\partial\Omega$. 
Depending on whether we consider the wave, heat or Schr\"odinger 
equation on~$\Omega$, the eigenvalues and eigenfunctions of $-\Delta_D^\Omega$ 
have various physical meaning.
Here we mostly use the quantum-mechanical language, 
where $-\Delta_D^\Omega$ is the Hamiltonian of an electron
constrained to~$\Omega$ by hard-wall boundaries
and the eigenvalues and eigenfunctions correspond to
bound-state energies and wave-functions, respectively.
(In order to simultaneously consider the other physical models, 
we disregard the possibility of adding a geometric potential
due to the embedding of~$\Omega$ in~$\Real^3$,
see~\cite{KRT} and references therein.)
When interested in analytic properties of the M\"obius strip~$\Omega$,
we call $-\Delta_D^\Omega$ the \emph{true} (or \emph{full}) {model}.

\begin{figure}
  {\centering
    \includegraphics{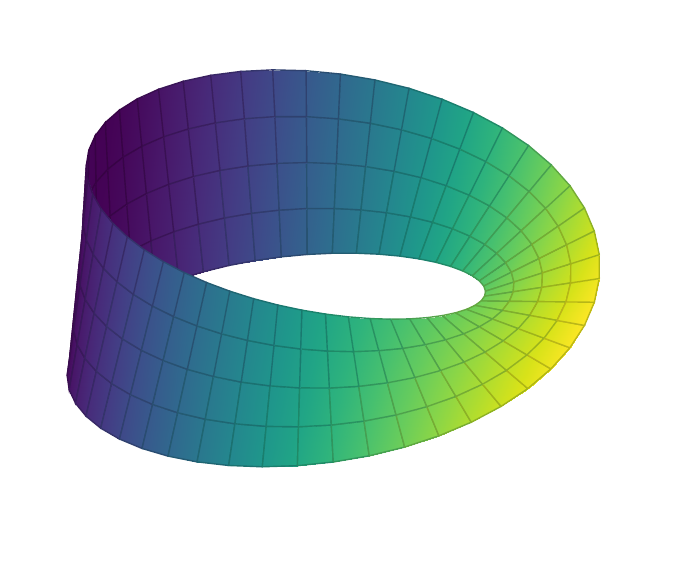}
  \par}
  \caption{M\"obius strip $\Omega$ built along a circle of radius $R$ and with width $2a$, see~\eqref{true}.}
  \label{figMobius}
\end{figure}

\paragraph{The fake model.}
The weak point of the true model $-\Delta_D^\Omega$ 
is that its spectrum cannot be computed explicitly
(for numerical calculations, 
see, \eg, \cite{Macdonald-Brandman-Ruuth_2011, LiRamMohan}, and below).
On the other hand, the spectrum of the \emph{flat} model
$-\Delta_D^{\Omega_0}$ represented by the Laplacian in%
\begin{equation*} 
  \Omega_0 := (0,2\pi R) \times (-a,a) 
  \,,
\end{equation*}
subject to Dirichlet boundary conditions 
\begin{equation}\label{bc.Dirichlet}
  \psi(s,\pm a) = 0 
\end{equation}
for almost every $s \in (0,2\pi R)$
and twisted periodic boundary conditions
\begin{align}
  \psi(0,t) &= \psi(2 \pi R,-t) \,, 
  \label{bc1}
  \\ 
  \partial_1\psi(0,t) &= \partial_1\psi(2 \pi R,-t) \,,
  \label{bc2}
\end{align}
for almost every $t \in (-a,a)$, is explicitly computable. 
Indeed, by considering the spectral problem 
(which can be solved by separation of variables)
for the Laplacian in the extended strip $(-2\pi R,2\pi R) \times (-a,a)$,
subject to Dirichlet boundary conditions on $(0,2\pi R)\times\{\pm a\}$
and standard periodic boundary conditions 
on $\{\pm 2\pi R\} \times (-a,a)$,
by symmetry arguments one easily arrives at
\begin{equation}\label{spec.fake}
  \sigma(-\Delta_D^{\Omega_0}) = 
  \left\{
  \left(\frac{m}{2 R}\right)^2 + \left(\frac{n\pi}{2a}\right)^2
  \right\}_{\subalign{m &\in \Int \\ n &\in \Nat^* \\ m+n &\text{ odd}}}
  ,
\end{equation}
where $\Nat^* := \Nat \setminus \{0\}$ 
(the set $\Nat$ of all natural numbers contain zero in our convention).
At the same time, the eigenfunctions corresponding 
to the eigenvalues in~\eqref{spec.fake} are explicit 
combinations of sines and cosines. 
While the flat model $-\Delta_D^{\Omega_0}$ 
keeps the topology of the M\"obius strip,
it completely disregards the curved nature of~$\Omega$,
and for this reason we also call it the \emph{fake} model.  
Of course, there is no reason to expect that the spectrum~\eqref{spec.fake}
is in any sense related to (\ie, approximating)
the spectrum of $-\Delta_D^{\Omega}$.

\paragraph{The not-so-fake model.}
The main objective of this paper is to introduce a new model
of the M\"obius strip, whose spectrum is explicitly computable 
and simultaneously approximates the spectrum of the true model $-\Delta_D^{\Omega}$
in the limit of \emph{thin} strips, \ie\ $a \to 0$. 
This \emph{not-so-fake} (or \emph{effective})
model is given by the perturbed operator
\begin{equation}\label{eff}
  H_\mathrm{eff} := -\Delta_D^{\Omega_0} + V_\mathrm{eff}
  \qquad \mbox{with} \qquad
  V_\mathrm{eff}(s,t) := -\frac{1}{8 R^2} \, \cos\left(\frac{s}{R}\right) 
  \,.
\end{equation}
Hence, the addition of the geometric potential~$V_\mathrm{eff}$
to the fake model is necessary to get 
a more realistic approximation of the dynamics.
In accordance with general thin strips \cite[Sec.~3]{K1},
we have
$V_\mathrm{eff}(s,t)=-\frac{1}{4} \kappa_g(s)^2 - \frac{1}{2} K(s,0)$,
where~$K$ is the Gauss curvature of~$\Omega$ 
and~$\kappa_g$ is the geodesic curvature of the underlying circle
as a curve on the surface~$\Omega$, see Remark~\ref{Rem.Fermi}.   
The aforementioned approximation in thin strips 
will be justified by showing that the norm of the difference of the resolvents
of $-\Delta_D^{\Omega}$ and~$H_\mathrm{eff}$ vanishes as $a \to 0$.
Two remarks are in order.
First, since the operators act in different Hilbert spaces,
a natural identification is necessary.
Second, since the eigenvalues  
tend to infinity as $a \to 0$, see~\eqref{spec.eff} below,
it does not really make sense to compare the resolvents
$(-\Delta_D^{\Omega}-z)^{-1}$ and $(H_\mathrm{eff}-z)^{-1}$ 
(at least with an $a$-independent~$z$);
instead we consider renormalised operators 
by subtracting the lowest transverse energy 
\begin{equation*}
 E_1 := \left(\frac{\pi}{2a}\right)^2 \,.
\end{equation*}
Namely, we prove (see Theorem~\ref{Thm.nrs})
\begin{equation}\label{nrs}
  \big\| U(-\Delta_D^{\Omega}-E_1-z)^{-1}U^{-1} 
  - (H_\mathrm{eff}-E_1-z)^{-1} \big\| = O(a) 
  \qquad \mbox{as} \qquad 
  a \to 0
  \,,
\end{equation}
where $U : \sii(\Omega) \to \sii(\Omega_0)$
is a suitable unitary transform
and~$z$ is any $a$-independent number lying simultaneously 
in the resolvent sets of the shifted operators
$-\Delta_D^{\Omega}-E_1$ and $H_\mathrm{eff}-E_1$.
In other words,  
we establish the norm-resolvent convergence
in a generalised sense.
An advantage of this approximation is that
(by the same extension trick as for the fake model above)
the spectrum of~$H_\mathrm{eff}$ can be computed explicitly:
\begin{equation}\label{spec.eff}
  \sigma(H_\mathrm{eff}) = 
  \left\{ 
  \left(\frac{1}{2 R}\right)^2 a_m\!\left(-\frac{1}{4}\right)
  + \left(\frac{n\pi}{2a}\right)^2 
  \right\}_{\subalign{m &\in \Nat \\ n&\in\Nat^* \\ m+n & \text{ odd}}}
  \ \cup \
  \left\{  
  \left(\frac{1}{2 R}\right)^2 b_m\!\left(-\frac{1}{4}\right)
  + \left(\frac{n\pi}{2a}\right)^2
  \right\}_{\subalign{m&\in\Nat^* \\ n&\in\Nat^* \\ m+n & \text{ odd}}}
  ,
\end{equation}
where~$a_m$ and~$b_m$ are Mathieu characteristic values
(see Remark~\ref{remark.Mathieu} below and \cite[Sec.~20]{Abramowitz-Stegun}).
The eigenfunctions corresponding 
to the eigenvalues in~\eqref{spec.eff} are this time explicit 
combinations of Mathieu integral order functions. 

\paragraph{Organisation of the paper.}
The fake and not-so-fake models are studied 
in Sections~\ref{Sec.fake} and~\ref{Sec.eff}, respectively,
where we particularly establish 
the spectral results~\eqref{spec.fake} and~\eqref{spec.eff}
and determine the corresponding eigenfunctions.
The norm-resolvent convergence~\eqref{nrs} 
is established in Section~\ref{Sec.nrs}.
Our analytical results are illustrated by numerical computations
in Section~\ref{Sec.numerics}.
In particular, we show the suboptimality of the rate $O(a)$ 
in the approximation of the spectrum of~$-\Delta_D^\Omega$
by the spectrum of the not-so-fake model~$H_\mathrm{eff}$.

\section{The fake model}\label{Sec.fake}
%
The fake model is introduced as the operator 
$-\Delta_D^{\Omega_0}$ in $\sii(\Omega_0)$ defined by 
\begin{equation*}
\begin{aligned}
  -\Delta_D^{\Omega_0} \psi 
  &:= -\Delta\psi \,,
  \\
  \Dom(-\Delta_D^{\Omega_0})
  &:= \left\{ \psi \in W^{2,2}(\Omega_0) : \
  \mbox{$\psi$ satisfies \eqref{bc.Dirichlet} and \eqref{bc1}--\eqref{bc2}}
  \right\} \,,
\end{aligned}
\end{equation*}
where the boundary conditions are understood
in the sense of Sobolev-space traces (see, \eg, \cite{Adams2}).

\begin{Proposition}
$-\Delta_D^{\Omega_0}$ is a positive self-adjoint operator
with compact resolvent.
\end{Proposition}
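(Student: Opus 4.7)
The cleanest route is via the associated quadratic form. I would define
\begin{equation*}
  Q[\psi] := \int_{\Omega_0} |\nabla\psi|^2 \, \der s \, \der t \,,
\end{equation*}
with form domain
\begin{equation*}
  \Dom(Q) := \left\{ \psi \in W^{1,2}(\Omega_0) : \
  \psi(s,\pm a)=0 \text{ for a.e.\ } s, \
  \psi(0,t)=\psi(2\pi R,-t) \text{ for a.e.\ } t
  \right\}
  ,
\end{equation*}
the boundary identities being understood in the sense of $W^{1,2}$-traces. Note that only the essential conditions \eqref{bc.Dirichlet} and \eqref{bc1} are incorporated into $\Dom(Q)$; condition \eqref{bc2} should emerge as a natural boundary condition.

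The first step is to check that $\Dom(Q)$ is a closed, densely defined subspace of $W^{1,2}(\Omega_0)$. Density follows since $C_c^\infty(\Omega_0)\subset\Dom(Q)$. Closedness in the $W^{1,2}$ topology is a consequence of the continuity of the trace operators $W^{1,2}(\Omega_0)\to L^2(\partial\Omega_0)$ together with the fact that the reflection $t\mapsto -t$ on the two vertical sides is an $L^2$-isometry, so the twisted identification defines a continuous linear constraint. Since $Q$ is manifestly non-negative and closed on this domain, the representation theorem yields a unique non-negative self-adjoint operator $T$ with $\Dom(T^{1/2})=\Dom(Q)$ and $Q[\psi]=\|T^{1/2}\psi\|^2$. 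Positivity is then immediate, and in fact strict positivity follows from a Poincaré-type inequality using the Dirichlet traces on $t=\pm a$.

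Next I would identify $T$ with $-\Delta_D^{\Omega_0}$. Integration by parts on smooth representatives $\psi\in\Dom(T)$ and arbitrary $\phi\in\Dom(Q)$ yields
\begin{equation*}
  \int_{\Omega_0} \overline{\phi}\,(-\Delta\psi)
  + \int_{\{s=2\pi R\}} \overline{\phi}\,\partial_1\psi
  - \int_{\{s=0\}} \overline{\phi}\,\partial_1\psi
  \,,
\end{equation*}
where the $t=\pm a$ boundary terms vanish by \eqref{bc.Dirichlet}. Choosing $\phi$ first compactly supported in $\Omega_0$ shows $T\psi=-\Delta\psi$ distributionally; local interior elliptic regularity together with boundary regularity at the smooth portions $t=\pm a$ and (after unfolding via the reflection $\psi(s,t)\mapsto\psi(s+2\pi R,-t)$ already alluded to in the introduction) at the vertical sides gives $\psi\in W^{2,2}(\Omega_0)$. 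Exploiting the twisted identification $\phi(0,t)=\phi(2\pi R,-t)$ in the boundary integrals above then forces precisely \eqref{bc2}, confirming that $\Dom(T)$ coincides with the domain stated for $-\Delta_D^{\Omega_0}$.

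Finally, compactness of the resolvent follows from the Rellich--Kondrachov theorem: $\Omega_0$ is a bounded Lipschitz domain, so $W^{1,2}(\Omega_0)\hookrightarrow L^2(\Omega_0)$ is compact; restricting to the closed subspace $\Dom(Q)$ equipped with the form norm keeps the embedding compact, and a standard criterion (see e.g.\ Reed--Simon) then ensures $(-\Delta_D^{\Omega_0}+1)^{-1}$ is compact on $\sii(\Omega_0)$. The main subtlety in the whole argument is the regularity step: carefully justifying the natural condition \eqref{bc2} at the non-smooth corners where $s\in\{0,2\pi R\}$ meets $t=\pm a$. The unfolding trick turns the twisted identification into standard periodicity on the doubled strip, reducing the question to a rectangle with mixed Dirichlet--periodic conditions where the regularity is classical.
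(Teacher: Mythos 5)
Your proposal is correct and follows essentially the same route as the paper: both define the closed form on the domain with only the essential conditions \eqref{bc.Dirichlet} and \eqref{bc1}, invoke the representation theorem, recover the natural condition \eqref{bc2} via integration by parts and elliptic regularity (with the unfolding/reflection trick), and obtain compactness of the resolvent from a compact Sobolev embedding. The only cosmetic difference is that you use the compactness of $W^{1,2}(\Omega_0)\hookrightarrow \sii(\Omega_0)$ on the form domain while the paper uses $W^{2,2}(\Omega_0)\hookrightarrow \sii(\Omega_0)$ on the operator domain; both are standard and equivalent here.
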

\begin{proof}
The operator is clearly densely defined.
Let $\|\cdot\|$ and $(\cdot,\cdot)$ denote the norm 
and inner product of $\sii(\Omega_0)$.
For every $\psi \in \Dom(-\Delta_D^{\Omega_0})$, we have
\begin{equation*}
\begin{aligned}
  (\psi,-\Delta_D^{\Omega_0}\psi)
  &= \|\nabla\psi\|^2 
  - \int_{-a}^{a} 
  \Big[\overline{\psi}(s,t)\,\partial_1\psi(s,t)\Big]_{s=0}^{s=2 \pi R} 
  \, \der t
  - \int_{0}^{2 \pi R} 
  \Big[\overline{\psi}(s,t)\,\partial_2\psi(s,t)\Big]_{t=-a}^{t=a} 
  \, \der s
  \\
  &= \|\nabla\psi\|^2 
  \,,
\end{aligned}
\end{equation*}
where the second boundary integral on the first line 
vanishes because of~\eqref{bc.Dirichlet} and, 
using \eqref{bc1}--\eqref{bc2} together with an integral substitution,
\begin{equation*}
\begin{aligned}
  \int_{-a}^{a} 
  \Big[\overline{\psi}(s,t)\,\partial_1\psi(s,t)\Big]_{s=0}^{s=2 \pi R} 
  \, \der t
  &= \int_{-a}^{a} 
  \overline{\psi}(2\pi R,t)\,\partial_1\psi(2\pi R,t) 
  \, \der t
  - 
  \int_{-a}^{a} 
  \overline{\psi}(0,t)\,\partial_1\psi(0,t) 
  \, \der t
  \\
  &= \int_{-a}^{a} 
  \overline{\psi}(0,-t)\,\partial_1\psi(0,-t) 
  \, \der t
  - 
  \int_{-a}^{a} 
  \overline{\psi}(0,t)\,\partial_1\psi(0,t) 
  \, \der t
  \\
  &= \int_{-a}^{a} 
  \overline{\psi}(0,u)\,\partial_1\psi(0,u) 
  \, \der u
  - 
  \int_{-a}^{a} 
  \overline{\psi}(0,t)\,\partial_1\psi(0,t) 
  \, \der t
  = 0
  \,.
\end{aligned} 
\end{equation*}
Hence, $-\Delta_D^{\Omega_0}$ is obviously symmetric.
At the same time, using the Poincar\'e-type inequality
\begin{equation}\label{Poincare}
  \forall f \in W_0^{1,2}((-a,a))
  \,, \qquad
  \int_{-a}^a |f'(t)|^2 \, \der t
  \geq  E_1
  \int_{-a}^a |f(t)|^2 \, \der t
  \,,
\end{equation}
we have 
\begin{equation}\label{Poincare.consequence}
  (\psi,-\Delta_D^{\Omega_0}\psi)
  = \|\nabla\psi\|^2 
  \geq \|\partial_2\psi\|^2
  \geq E_1 \|\psi\|^2
  \,,
\end{equation}
so $-\Delta_D^{\Omega_0}$ is a positive operator
whose spectrum does not start below~$E_1$.
To see that $-\Delta_D^{\Omega_0}$ is self-adjoint is more subtle.
A possibility how to verify it is to consider the closed quadratic form
\begin{equation}\label{form.fake}
\begin{aligned}
  Q_D^{\Omega_0}[\psi] 
  &:= \|\nabla\psi\|^2 \,,
  \\
  \Dom(Q_D^{\Omega_0})
  &:= \left\{ \psi \in W^{1,2}(\Omega_0) : \
  \mbox{$\psi$ satisfies \eqref{bc.Dirichlet} and \eqref{bc1}}
  \right\} \,.
\end{aligned}
\end{equation}
Let~$H_0$ denote the self-adjoint operator associated with $Q_D^{\Omega_0}$
via the first representation theorem \cite[Thm.~VI.2.1]{Kato}.
We claim that $H_0 = -\Delta_D^{\Omega_0}$.
Indeed, by an integration by parts,
it is easy to see that $-\Delta_D^{\Omega_0} \subset H_0$.
The opposite inclusion $-\Delta_D^{\Omega_0} \supset H_0$
can be checked with help of standard elliptic regularity theory
(\cf~\cite[Sec.~3]{BK} for an analogous problem).
Finally, the resolvent of $-\Delta_D^{\Omega_0}$ is compact
as a consequence of the compactness of the Sobolev embedding
$W^{2,2}(\Omega_0) \hookrightarrow \sii(\Omega_0)$.
\end{proof}

In order to determine the spectrum of $-\Delta_D^{\Omega_0}$,
it is convenient to consider the extended strip 
\begin{equation*}
  \tilde{\Omega}_0 := (-2\pi R,2\pi R) \times (-a,a)
\end{equation*}
and the associated Laplacian~$T_0$ in $\sii(\tilde{\Omega}_0)$
with combined Dirichlet and standard periodic boundary conditions,
\begin{equation*}
\begin{aligned}
  T_0\psi 
  &:= -\Delta\psi \,,
  \\
  \Dom(T_0)
  &:= \Big\{ \psi \in W^{2,2}(\tilde{\Omega}_0) : \
  \mbox{$\psi$ satisfies} \quad
  \psi(s,\pm a) = 0 \,, \quad
  \forall s \in (-2\pi R,2\pi R) \,,
  \\
  & \qquad\qquad
  \mbox{and} \quad
  \psi(-2 \pi R,t) = \psi(2 \pi R,t) \,, \quad
  \partial_1\psi(-2 \pi R,t) = \partial_1\psi(2 \pi R,t) \,, \quad
  \forall t \in (-a,a) 
  \Big\} \,.
\end{aligned}
\end{equation*}
This operator is well known to be self-adjoint 
and its spectrum can be easily found by separation of variables:
\begin{equation*}
  \sigma(T_0) = \left\{
  \left(\frac{m}{2R}\right)^2 + \left(\frac{n\pi}{2a}\right)^2
  \right\}_{\subalign{m &\in \Int \\ n &\in \Nat^*}}
  \,.
\end{equation*}
The corresponding normalised eigenfunctions of~$T_0$ read
\begin{equation*}
  \phi_{m,n}(s,t) = \varphi_m(s) \, \chi_n(t) 
\end{equation*}
with
\begin{equation}\label{chi}
  \varphi_m(s) := \sqrt{\frac{1}{4 \pi R}} \, e^{i\frac{m}{2R}s}
  \,, \qquad
  \chi_n(t) :=
\begin{cases}
  \sqrt{\frac{1}{a}} \,
  \cos(\frac{n\pi}{2a}t)
  & \mbox{if $n$ is odd}
  \,, \\
  \sqrt{\frac{1}{a}} \,
  \sin(\frac{n\pi}{2a}t)
  & \mbox{if $n$ is even}
  \,.
\end{cases}
\end{equation}
As eigenfunctions of a self-adjoint operator,
it is also well known that 
$\{\phi_{m,n}\}_{m \in \Int,\, n \in \Nat^*}$
is a complete orthonormal set in $\sii((-2\pi R,2 \pi R) \times (-a,a))$.

By symmetry properties of~$\varphi_m$ and~$\chi_n$,
we have
\begin{equation*}
\begin{aligned}
  \phi_{m,n}(2\pi R,-t) &= (-1)^{m+n+1} \, \phi_{m,n}(0,t)
  \,, \\
  \partial_1\phi_{m,n}(2 \pi R,-t) &= (-1)^{m+n+1} \, \partial_1\phi_{m,n}(0,t)
  \,.
\end{aligned}
\end{equation*}
Therefore we see that~$\phi_{m,n}$ satisfies the boundary conditions 
\eqref{bc1}--\eqref{bc2} if, and only if, $m+n$ is odd.
Consequently,
\begin{equation}\label{inclusion}
  \sigma(-\Delta_D^{\Omega_0}) \supset 
  \left\{
  \left(\frac{m}{2 R}\right)^2 + \left(\frac{n\pi}{2a}\right)^2
  \right\}_{\subalign{m &\in \Int \\ n &\in \Nat^* \\ m+n &\text{ odd}}}
\end{equation}
and the corresponding normalised eigenfunctions of~$-\Delta_D^{\Omega_0}$ 
are given by the restrictions
\begin{equation}\label{eq.psimn}
  \psi_{m,n} := \sqrt{2} \, \phi_{m,n} \upharpoonright \Omega_0
  \,, \qquad
  m \in \Int, \ n \in \Nat^*, \ m+n \text{ is odd}
  \,.
\end{equation}
To show that the right-hand side of~\eqref{inclusion} 
determines \emph{all} the eigenvalues of~$-\Delta_D^{\Omega_0}$, 
we need the following result. 

\begin{Proposition}\label{prop.flat}
$\{\psi_{m,n}\}_{m \in \Int,\, n \in \Nat^*,\, m+n \text{ \emph{odd}}}$
is a complete orthonormal set in $\sii(\Omega_0)$.
\end{Proposition}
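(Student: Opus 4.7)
The plan is to reduce both orthonormality and completeness to the known completeness of the family $\{\phi_{m,n}\}_{m\in\Int,\,n\in\Nat^*}$ in $\sii(\tilde\Omega_0)$ by exploiting the twisted-periodic symmetry built into the definition of $\psi_{m,n}$. The role of the parity restriction $m+n$ odd should emerge as a selection rule coming from the identity $\phi_{m,n}(2\pi R,-t)=(-1)^{m+n+1}\phi_{m,n}(0,t)$ already established in the excerpt.

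For orthonormality, I would just compute. Since $|\varphi_m(s)|^2=\frac{1}{4\pi R}$ is independent of $s$, restricting to $\Omega_0$ cuts the $L^2$-norm in half, and the factor $\sqrt{2}$ in \eqref{eq.psimn} is chosen precisely to renormalise. For the off-diagonal inner products
\[
  (\psi_{m,n},\psi_{m',n'})_{\sii(\Omega_0)}
  = 2 \int_0^{2\pi R}\overline{\varphi_m(s)}\varphi_{m'}(s)\,\der s
  \,\int_{-a}^a \overline{\chi_n(t)}\chi_{n'}(t)\,\der t ,
\]
the $t$-integral is $\delta_{n,n'}$ because the $\chi_n$ are already orthonormal on $(-a,a)$. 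The $s$-integral equals $\delta_{m,m'}$ when $m-m'$ is even, and I would observe that the parity condition forces exactly this: since $m+n$ and $m'+n'$ are both odd, $m-m'$ and $n-n'$ share parity, so if $n=n'$ then automatically $m-m'$ is even and the integral vanishes whenever $m\neq m'$.

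The heart of the matter is completeness. I would suppose $\psi\in\sii(\Omega_0)$ is orthogonal to every $\psi_{m,n}$ with $m+n$ odd, and extend $\psi$ to a function $\tilde\psi\in\sii(\tilde\Omega_0)$ by the twisted-periodic rule
\[
  \tilde\psi(s,t) := \begin{cases} \psi(s,t) & s\in(0,2\pi R), \\ \psi(s+2\pi R,-t) & s\in(-2\pi R,0), \end{cases}
\]
which mirrors the boundary conditions \eqref{bc1}--\eqref{bc2} satisfied (in a formal sense) by the $\psi_{m,n}$. The main computation is then to split the inner product $(\tilde\psi,\phi_{m,n})_{\sii(\tilde\Omega_0)}$ into integrals over $(-2\pi R,0)\times(-a,a)$ and $(0,2\pi R)\times(-a,a)$, perform the change of variables $s\mapsto s+2\pi R$ and $t\mapsto -t$ on the first piece, and use $\varphi_m(s-2\pi R)=(-1)^m\varphi_m(s)$ together with $\chi_n(-t)=(-1)^{n+1}\chi_n(t)$. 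The result is the parity-selection identity
\[
  (\tilde\psi,\phi_{m,n})_{\sii(\tilde\Omega_0)}
  = \bigl[\,1+(-1)^{m+n+1}\bigr]\,(\psi,\phi_{m,n})_{\sii(\Omega_0)},
\]
which vanishes automatically when $m+n$ is even and equals $\sqrt{2}\,(\psi,\psi_{m,n})_{\sii(\Omega_0)}$ when $m+n$ is odd.

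By hypothesis the right-hand side is zero for every $(m,n)$, so $\tilde\psi$ is orthogonal to every member of the complete orthonormal family $\{\phi_{m,n}\}$ in $\sii(\tilde\Omega_0)$; hence $\tilde\psi=0$ and therefore $\psi=0$. The main technical obstacle is purely bookkeeping: making sure the twisted extension is consistent (measurability, $\sii$-norm doubling) and that the signs arising from the two involutions $s\mapsto s+2\pi R$ and $t\mapsto -t$ combine to exactly the factor $1+(-1)^{m+n+1}$ that automatically enforces the parity rule; once that identity is in hand, completeness follows immediately from completeness on the extended strip.
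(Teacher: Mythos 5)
Your proposal is correct and follows essentially the same route as the paper: the key construction in both is the twisted-periodic extension of a function on $\Omega_0$ to the doubled strip $\tilde\Omega_0$ and the resulting selection identity $(\phi_{m,n},\tilde\psi)=\bigl[1+(-1)^{m+n+1}\bigr](\phi_{m,n},\psi)_{\sii(\Omega_0)}$, which kills the even-parity modes. The only cosmetic difference is that the paper deduces completeness by transporting the Parseval equality from $\tilde\Omega_0$ to $\Omega_0$, whereas you run the equivalent ``trivial orthogonal complement'' argument (and you make the orthonormality computation, with its reliance on the parity constraint, explicit rather than implicit).
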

\begin{proof}
The property that
$\{\phi_{m,n}\}_{m \in \Int,\, n \in \Nat^*}$
is a complete orthonormal set in $\sii(\tilde{\Omega}_0)$
is equivalent to the validity of the Parseval equality
\begin{equation}\label{Parseval.per}
  \|f\|^2 = \sum_{m \in \Int, \, n \in \Nat^*} |(\phi_{m,n},f)|^2
\end{equation}
for every $f \in \sii(\tilde{\Omega}_0)$,
where~$\|\cdot\|$ and $(\cdot,\cdot)$ 
denote the norm and inner product of $\sii(\tilde{\Omega}_0)$, respectively.
Given an arbitrary $g \in \sii(\Omega_0)$,
we define the extension
\begin{equation}\label{extension}
  f(s,t) :=
  \begin{cases}
    g(s,t) & \mbox{if} \quad s > 0 \,,
    \\
    g(s+2\pi R,-t) & \mbox{if} \quad s < 0 \,.
  \end{cases}
\end{equation}
By an obvious integral substitution, 
it is straightforward to check the identity
\begin{equation}\label{id1}
  \|f\|^2 = 2 \, \|g\|^2 \,,
\end{equation}
where on the right-hand side
we use the same notation~$\|\cdot\|$ for the norm of $\sii(\Omega_0)$.
At the same time,
using in addition to the substitution
the symmetry properties of~$\varphi_m$ and~$\chi_n$,
we have
\begin{equation}\label{id2}
  (\phi_{m,n},f) = \frac{1}{\sqrt{2}} \ [1+(-1)^{m+n+1}] \, (\psi_{m,n},g)
  \,,
\end{equation}
where on the right-hand side
we use the same notation~$(\cdot,\cdot)$ for the inner product of $\sii(\Omega_0)$.
Putting~\eqref{id1} and~\eqref{id2} into~\eqref{Parseval.per},
we get the Parseval equality
\begin{equation*}\label{Parseval.Mobius}
  \|g\|^2 = \sum_{\subalign{&m \in \Int, \, n \in \Nat^* \\ &\,m+n \text{ odd}}} 
  |(\psi_{m,n},g)|^2 
  \,,
\end{equation*}
which is equivalent to the desired completeness result.
\end{proof}

As a consequence of this proposition and~\eqref{inclusion},
we conclude with the desired result~\eqref{spec.fake}.
The elements of the set~\eqref{spec.fake} 
will be denoted by~$\lambda_{m,n}$.

\begin{Remark}
The lowest eigenvalue 
\begin{equation*}
  \lambda_{0,1} = E_1
\end{equation*}
is simple and the corresponding eigenfunction~$\psi_{0,1}$ is positive. 
In particular, the inequality~\eqref{Poincare.consequence} is optimal.
The eigenvalues~$\lambda_{m,n}$ with $m \not=0$ are always degenerate.   
In particular, the second eigenvalue
\begin{equation*}
  \min\{\lambda_{1,2}=\lambda_{-1,2},\lambda_{2,1}=\lambda_{-2,1}\}
\end{equation*}
is always degenerate. 
Furthermore, if $\pi R=a$ then the second eigenvalue has multiplicity four.
\end{Remark}
%

\section{The not-so-fake model}\label{Sec.eff}
%
Since~$V_\mathrm{eff}$ is real-valued and bounded,
the operator sum $H_\mathrm{eff} := -\Delta_D^{\Omega_0} + V_\mathrm{eff}$
in~\eqref{eff} defines a self-adjoint operator. 
As in the fake model, together with~$H_\mathrm{eff}$ in $\sii(\Omega_0)$,
we also consider $T_\mathrm{eff} := T_0 + V_\mathrm{eff}$ 
in the extended Hilbert space $\sii(\tilde{\Omega}_0)$.
The eigenvalues and eigenfunctions of~$T_\mathrm{eff}$ 
can be found by separation of variables.
In the second variable we get the same result as in the previous section:
the normalised eigenfunctions of the Laplacian in $\sii((-a,a))$,
subject to Dirichlet boundary conditions, 
are numbered by $n \in \Nat^*$ and given by~$\chi_n$ as in~\eqref{chi}
and the corresponding eigenvalues are $E_1 n^2$.
The case of the first variable is a little bit more involved.
After the separation of variables,
we arrive at the differential equation
\begin{equation}\label{eq.potential.s}
  -\varphi''(s) 
  - \frac{1}{8 R^2} \cos\left(\frac{s}{R}\right) \varphi(s) = \nu \varphi(s)\,.
\end{equation}
It turns out that this is the Mathieu differential equation.
Before we proceed any further let us first review basic properties of Mathieu functions.

\begin{Remark}[Mathieu functions]\label{remark.Mathieu}
We use the following notation (see~\cite[\S28.2]{NIST}).
Fix $q, \mu \in \Real$ and consider the ordinary differential equation
\begin{equation}\label{eq.mathieu}
  y''(\eta) + \big(\mu - 2q\cos(2\eta)\big) y(\eta) = 0\,.
\end{equation}
This equation has a $2\pi$-periodic solution if, and only if,
$\mu = a_m(q)$ or $\mu = b_m(q)$, 
where $a_m(q)$ with $m\in\Nat$
and $b_m(q)$ with $m\in\Nat^*$ 
are the so-called Mathieu characteristic values.
These characteristic values satisfy
\begin{align}
  q &> 0: \quad a_0 < b_1 < a_1 < b_2 < a_2 < \cdots\,, \nonumber \\
  q &< 0: \quad a_0 < a_1 < b_1 < b_2 < a_2 < \cdots\,, \label{eq.mathieu.ab} \\
  q &= 0: \quad a_m(0) = b_m(0) = m^2\,. \nonumber
\end{align}
The Mathieu integral order functions $\ce_m(\eta, q)$ with $m\in\Nat$
and $\se_m(\eta, q)$ with $m\in\Nat^*$ are defined in the following way: 
$\ce_m(\eta, q)$ is the even solution of~\eqref{eq.mathieu} with $\mu = a_m(q)$ 
and $\se_m(\eta, q)$ is the odd solution of~\eqref{eq.mathieu} with $\mu = b_m(q)$.
Both $\ce_m(\cdot, q)$ and $\se_m(\cdot, q)$ are $2\pi$-periodic.
Moreover, $\ce_{2m}(\cdot, q)$ and $\se_{2m+2}(\cdot, q)$ 
are $\pi$-periodic and $\ce_{2m+1}(\cdot, q)$ and $\se_{2m+1}(\cdot, q)$ 
are antiperiodic with antiperiod~$\pi$.
For any $q \in \Real$, the integral order Mathieu functions 
$\ce_m(\eta, q)$ and $\se_m(\eta, q)$ taken together 
form an orthogonal basis in $\sii((-\pi, \pi))$ 
(see~\cite[\S 20.5]{Abramowitz-Stegun}).
We assume both $\ce_m(\eta,q)$ and $\se_m(\eta,q)$ 
are normalised to $\sqrt{\pi}$ in $\sii((-\pi, \pi))$, 
\ie\ the equalities
\begin{equation}\label{final1}
  \int_{-\pi}^\pi \big|\ce_m(\eta,q)\big|^2\,\mathrm{d}\eta 
  = \int_{-\pi}^\pi \big|\se_m(\eta,q)\big|^2\,\mathrm{d}\eta 
  = \pi
\end{equation}
hold for all possible values of~$m$. 
This convention is in agreement with~\cite{NIST} 
and it is respected by Wolfram Mathematica, too.
The (anti)periodicity then implies
\begin{equation}\label{final2}
  \int_0^\pi \big|\ce_m(\eta,q)\big|^2\,\mathrm{d}\eta 
  = \int_0^\pi \big|\se_m(\eta,q)\big|^2\,\mathrm{d}\eta 
  = \frac{\pi}{2}\,.
\end{equation}
\end{Remark}

Let us now return to the equation~\eqref{eq.potential.s}.
Employing a simple change of the independent variable, $\eta = s/(2R)$, 
we immediately get the Mathieu equation~\eqref{eq.mathieu} 
with $q = -1/4$ and $\mu = 4 R^2 \nu$.
Thus the equation~\eqref{eq.potential.s} has the following 
$(4 \pi R)$-periodic 
and normalised solutions if and only if $\nu$ satisfies one of the indicated conditions
\begin{align}
  \label{eq.varphi1}
  \varphi^{(1)}_m(s) 
  &:= \frac{1}{\sqrt{\pi}} \se_m\left(\frac{s}{2R}, -\frac{1}{4}\right) &&
  \text{if} \quad 
  b_m\left(-\frac{1}{4}\right) = 4 R^2 \nu \quad
  \text{for some} \quad m \in \Nat^*, \\
  \label{eq.varphi2}
  \varphi^{(2)}_m(s) 
  &:= \frac{1}{\sqrt{\pi}} \ce_m\left(\frac{s}{2R}, -\frac{1}{4}\right) &&
  \text{if} \quad 
  a_m\left(-\frac{1}{4}\right) = 4 R^2 \nu  \quad
  \text{for some} \quad m \in \Nat\,.
\end{align}
The eigenvalues of~$T_\mathrm{eff}$ therefore read
\[
  \sigma(T_\mathrm{eff}) = 
  \left\{ 
  \left(\frac{1}{2R}\right)^2 a_m\left(-\frac{1}{4}\right)
  + \left(\frac{n\pi}{2a}\right)^2   
  \right\}_{\subalign{m&\in\Nat \\ n&\in\Nat^*}}
  \ \cup \
  \left\{ 
  \left(\frac{1}{2R}\right)^2 b_m\left(-\frac{1}{4}\right)
  + \left(\frac{n\pi}{2a}\right)^2 
  \right\}_{\subalign{m&\in\Nat^* \\ n&\in\Nat^*}}\,.
\]
The corresponding normalised eigenfunctions are given by 
\begin{align*}
  \phi^{(1)}_{m,n}(s,t) 
  &:= \varphi^{(1)}_m(s) \, \chi_n(t) \,, \qquad m\in\Nat^*, \ n\in\Nat^*, 
  \\
  \phi^{(2)}_{m,n}(s,t) 
  &:= \varphi^{(2)}_m(s) \, \chi_n(t) \,, \qquad m\in\Nat\,, \ n\in\Nat^*,
\end{align*}
and they form a complete orthonormal set of $\sii(\tilde{\Omega}_0)$.

Let us now find the eigenfunctions and eigenvalues of 
the not-so-fake M\"obius-strip operator~$H_\mathrm{eff}$.
Note that for any $j=1,2$ the functions $\varphi^{(j)}_m$ 
are antiperiodic (respectively, periodic)
with antiperiod $2 \pi R$ (respectively, period $2 \pi R$)
whenever $m$ is odd (respectively, even). 
Using this observation we establish 
the following symmetry properties of the eigenfunctions of $T_\mathrm{eff}$: 
\begin{equation}\label{eq.phi.symm}
  \phi^{(j)}_{m,n}(s+2\pi R,-t) 
  = \varphi^{(j)}_m(s+2\pi R) \chi_n(-t) 
  = (-1)^m \varphi^{(j)}_m(s) (-1)^{n+1} \chi_n(t) 
  = (-1)^{m+n+1} \phi^{(j)}_{m,n}(s,t)
  \,,
\end{equation}
for any $j=1,2$ and all permissible~$m$ and~$n$.
In particular, setting $s=0$ in the last equation we have
\begin{equation*}
  \phi^{(j)}_{m,n}(2\pi R,-t) = (-1)^{m+n+1} \phi^{(j)}_{m,n}(0,t)
\end{equation*}
and so~$\phi^{(j)}_{r,n}$ with $j=1,2$ satisfies 
the boundary conditions~\eqref{bc1}--\eqref{bc2} 
if, and only if, $m+n$ is odd.
Consequently,
\begin{equation}\label{inclusion.potential}
  \sigma(H_\mathrm{eff}) \supset
  \left\{ 
  \left(\frac{1}{2R}\right)^2 a_m\left(-\frac{1}{4}\right)
  + \left(\frac{n\pi}{2a}\right)^2   
  \right\}_{\subalign{m&\in\Nat \\ n&\in\Nat^* \\ m+n & \text{ odd}}}
  \ \cup \
  \left\{ 
  \left(\frac{1}{2R}\right)^2 b_m\left(-\frac{1}{4}\right)
  + \left(\frac{n\pi}{2a}\right)^2 
  \right\}_{\subalign{m&\in\Nat^* \\ n&\in\Nat^* \\ m+n & \text{ odd}}}
  \,.
\end{equation}
The corresponding normalised eigenfunctions of~$H_\mathrm{eff}$ 
are given by the restrictions
\begin{equation*}
  \psi^{(j)}_{m,n} := \sqrt{2} \, \phi^{(j)}_{m,n} \upharpoonright \Omega_0
  \,,
\end{equation*}
where $(m,n) \in \Nat^* \times \Nat^* =: \Nat_1$ if $j = 1$ 
and $(m,n) \in \Nat \times \Nat^* =: \Nat_2$ if $j=2$.
That the normalisation factor $\sqrt{2}$ is correct follows from 
the final equations~\eqref{final1} and~\eqref{final2} in Remark~\ref{remark.Mathieu} 
and equations \eqref{eq.varphi1}--\eqref{eq.varphi2}.
 
To show that the right-hand side of~\eqref{inclusion.potential} 
determines \emph{all} the eigenvalues of~$H_{\mathrm{eff}}$, we need the following result analogous to Proposition~\ref{prop.flat}.
\begin{Proposition}
$\big\{\psi^{(j)}_{m,n}\big\}_{j=1,2, \, (m,n)\in\Nat_j, \, m+n \text{\emph{ odd}}}$
is a complete orthonormal set in $\sii(\Omega_0)$.
\end{Proposition}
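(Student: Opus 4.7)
The plan is to imitate the proof of Proposition~\ref{prop.flat} almost verbatim, now using the Mathieu-based eigenfunctions $\phi^{(j)}_{m,n}$ in place of $\phi_{m,n}$ and invoking the symmetry identity~\eqref{eq.phi.symm} in place of the corresponding trigonometric symmetry. The starting point is the fact, already stated in the text preceding the proposition, that the union
$\{\phi^{(j)}_{m,n}\}_{j=1,2,\,(m,n)\in\Nat_j}$
is a complete orthonormal set in $\sii(\tilde{\Omega}_0)$, equivalently, the Parseval equality
\begin{equation*}
 \|f\|^2 = \sum_{j=1,2}\sum_{(m,n)\in\Nat_j}\big|(\phi^{(j)}_{m,n},f)\big|^2
\end{equation*}
holds for every $f \in \sii(\tilde{\Omega}_0)$.

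Given an arbitrary $g \in \sii(\Omega_0)$, I would extend it to $f \in \sii(\tilde{\Omega}_0)$ by exactly the same formula~\eqref{extension} as before. The elementary substitution argument giving~\eqref{id1} does not care which eigenbasis is used, so $\|f\|^2 = 2\|g\|^2$ is immediate. The key step is then to compute $(\phi^{(j)}_{m,n},f)$. Splitting the integral over $\tilde\Omega_0$ into the pieces $s>0$ and $s<0$, substituting $s' = s+2\pi R$ and $t' = -t$ on the second piece, and applying~\eqref{eq.phi.symm} to rewrite $\phi^{(j)}_{m,n}(s'-2\pi R,-t') = (-1)^{m+n+1}\phi^{(j)}_{m,n}(s',t')$, one obtains the analogue of~\eqref{id2}, namely
\begin{equation*}
 (\phi^{(j)}_{m,n},f) = \tfrac{1}{\sqrt{2}}\,\big[1+(-1)^{m+n+1}\big]\,(\psi^{(j)}_{m,n},g)\,.
\end{equation*}
The bracket equals $2$ if $m+n$ is odd and $0$ if $m+n$ is even, so only the odd-parity indices contribute.

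Plugging these two identities into the Parseval equality in $\sii(\tilde\Omega_0)$ yields
\begin{equation*}
 2\|g\|^2 \;=\; \sum_{\subalign{&j=1,2,\ (m,n)\in\Nat_j\\ &\,m+n \text{ odd}}} 2\,\big|(\psi^{(j)}_{m,n},g)\big|^2,
\end{equation*}
which is the Parseval equality for the system $\{\psi^{(j)}_{m,n}\}$ in $\sii(\Omega_0)$ and is equivalent to the claimed completeness. Orthonormality is a separate, routine check: orthogonality between distinct elements of the same family $j\in\{1,2\}$ follows from the orthogonality of $\ce_m$ (respectively $\se_m$) for different $m$ on $(-\pi,\pi)$ combined with the orthogonality of the $\chi_n$, while orthogonality across the two families uses the parity (even versus odd) of $\ce_m$ and $\se_m$; the normalisation $\sqrt{2}$ reproduces unit $\sii(\Omega_0)$-norm thanks to~\eqref{final2}. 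I do not expect any real obstacle here, as every ingredient has already been prepared in the excerpt; the only point that deserves a sentence of care is verifying that the parity reduction of $[1+(-1)^{m+n+1}]^2$ correctly cancels the factor $2$ produced by $\|f\|^2=2\|g\|^2$, exactly as in the fake-model proof.
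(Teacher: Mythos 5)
Your proposal is correct and follows essentially the same route as the paper: extend $g$ by the twisted reflection, use the Parseval equality for the extended system together with the symmetry identity for $\phi^{(j)}_{m,n}$ to reduce to the odd-parity indices, and cancel the factor $2$ coming from $\|f\|^2=2\|g\|^2$. The only (harmless) difference is that you spell out the orthonormality check, which the paper handles in the text preceding the proposition.
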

\begin{proof}
The property that the set $\big\{\phi^{(j)}_{m,n}\big\}_{j=1,2, \, (m,n)\in\Nat_j}$
is a complete orthonormal set in $\sii(\tilde{\Omega}_0)$
is equivalent to the validity of the Parseval equality
\begin{equation}\label{Parseval.per2}
  \|f\|^2 = \sum_{\subalign{&(m,n) \in \Nat_j \\ &\ j = 1,2 \ }} 
  \big|\big(\phi^{(j)}_{m,n},f\big)\big|^2
\end{equation}
for every $f \in \sii(\tilde{\Omega}_0)$.
Given an arbitrary $g \in \sii(\Omega_0)$,
we define the extension $f \in \sii(\tilde{\Omega}_0)$ 
as in~\eqref{extension}.
By an obvious integral substitution, 
it is straightforward to check the identity~\eqref{id1}.
At the same time,
using in addition to the substitution
the symmetry property~\eqref{eq.phi.symm},
we have
\begin{align}
  \big(\phi^{(j)}_{m,n},\, f\big) 
  &= \int_{(-2 \pi R,0) \times (-a,a)} \phi^{(j)}_{m,n}(s,t) \, g(s+2 \pi R,-t) 
  \,\mathrm{d}s \, \mathrm{d}t 
  + \int_{(0,2 \pi R) \times (-a,a)} \phi^{(j)}_{m,n}(s,t) \, g(s,t) 
  \,\mathrm{d}s \, \mathrm{d}t \nonumber
  \\
  &= \int_{(0,2 \pi R) \times (-a,a)} \phi^{(j)}_{m,n}(s-2\pi R,-t) \, g(s,t) 
  \,\mathrm{d}s\,\mathrm{d}t 
  + \frac{1}{\sqrt{2}} \int_{(0,2\pi R) \times (-a,a)} \psi^{(j)}_{r,n}(s,t) \, g(s,t) 
  \,\mathrm{d}s \, \mathrm{d}t \nonumber
  \\
  &= \frac{(-1)^{m+n+1}}{\sqrt{2}}
  \int_{(0,2\pi R) \times (-a,a)} 
  \psi^{(j)}_{m,n}(s,t) \, g(s,t) 
  \,\mathrm{d}s \, \mathrm{d}t 
  + \frac{1}{\sqrt{2}} 
  \int_{(0,2\pi R) \times (-a,a)} 
  \psi^{(j)}_{m,n}(s,t) \, g(s,t) 
  \,\mathrm{d}s \, \mathrm{d}t 
  \nonumber\\
  &= \frac{1}{\sqrt{2}} \big[ (-1)^{m+n+1} + 1 \big] 
  \big(\psi^{(j)}_{m,n}, \, g \big)\,. 
  \label{id4}
\end{align}
Putting~\eqref{id1} and~\eqref{id4} into~\eqref{Parseval.per2},
we get the Parseval inequality
\begin{equation*}\label{Parseval.Mobius2}
  \|g\|^2 = \sum_{\subalign{&(m,n) \in \Nat_j \\ &\ j=1,2 \ \\ & m+n \text{ odd}}}
  \big|\big(\psi^{(j)}_{m,n},\,g\big)\big|^2 
  \,,
\end{equation*}
which is equivalent to the desired completeness result.
\end{proof}

As a consequence of this proposition and~\eqref{inclusion.potential},
we conclude with the desired result~\eqref{spec.eff}.
Note that the particular Mathieu characteristic values $a_m(-1/4)$ and $b_m(-1/4)$ appearing in~\eqref{spec.eff} do not depend on $a$ neither $R$.
However, their value gets very close to each other with increasing $m$.
First few of these values are presented in Table~\ref{tab.mathieu.characteristics}.
Consequently, the not-so-fake model exhibits pairs of eigenvalues located very close each other (also recall~\eqref{eq.mathieu.ab}).

\begin{table}
  {\centering
  \begin{tabular}{lr@{}lr@{}l}
  $m$ &   & $a_m$                                 &  &  $b_m$ \\ \hline
   0  &  -0&.03103939547561732443850972818046737540 & & \\
   1  &  0&.74242882598662974339949054767095543815 &   1&.24194112824291514482231057477841662622 \\
   2  &  4&.02582908464560324171350493521402514557 &   3&.99479307863211894594328093443536761399 \\
   3  &  9&.00366486704623913463365662695182921571 &   9&.00415255154693478030510107620470513307 \\
   4  &  16&.00208529046719562998287970766353836899 &  16&.00208190103817298727073812993351765300 \\
   5  &  25&.00130213222684081366209108945453834337 &  25&.00130214546980228095721811268235655121 \\
   6  &  36&.00089287379843422726407677439950789279 &  36&.00089287376532391463296827349981967276 \\
   7  &  49&.00065104784806396399969278784780613747 &  49&.00065104784812144953869393158610105146 \\
   8  &  64&.00049603440671169350384368118283820869 &  64&.00049603440671162017886328541877470187 \\
   9  &  81&.00039062627570760760462351056102476286 &  81&.00039062627570760767623083270127588410 \\
  10  &  100&.00031565723007867410511381290959992431 & 100&.00031565723007867410505855991940003139
  \end{tabular}
  \par}
  \caption{Mathieu characteristic values $a_m(-1/4)$ and $b_m(-1/4)$
  occurring in the spectrum~\eqref{spec.eff} of the not-so-fake model computed by Wolfram Mathematica.}
  \label{tab.mathieu.characteristics}
\end{table}

\section{From the true to the not-so-fake model}\label{Sec.nrs}
%
By definition~\eqref{true}, 
$\Omega = \mathscr{L}([0,2\pi R)\times(-a,a))$,
where $\mathscr{L}:\Real^2\to\Real^3$ is given by 
\begin{equation*}
  \mathscr{L}(s,t) :=
  \left( 
  \left[R-t \cos\left(\frac{s}{2R}\right)\right] 
  \cos\left(\frac{s}{R}\right), 
  \left[R-t \cos\left(\frac{s}{2R}\right)\right] 
  \sin\left(\frac{s}{R}\right), 
  - t \sin\left(\frac{s}{2 R}\right)
  \right)
  \,.
\end{equation*}
Except for the segment $\mathscr{L}(\{0\}\times(-a,a))$, 
which has the Lebesgue measure equal to zero,
it is thus possible to identify~$\Omega$ 
with the Riemannian manifold $(\Omega_0,G)$,
where $G := \nabla\mathscr{L} \cdot (\nabla\mathscr{L})^T$
is the metric induced by~$\mathscr{L}$. 
It is straightforward to check that~$G$ has the diagonal form
\begin{equation*}\label{metric}
  G = 
  \begin{pmatrix}
    f^2 & 0 \\
    0 & 1
  \end{pmatrix} 
  \qquad \mbox{with} \qquad
  f(s,t) := \sqrt{ \left[ 1 - \frac{t}{R} \cos\left(\frac{s}{2R}\right)\right]^2 
  + \left( \frac{t}{2R} \right)^2 } 
  \,.
\end{equation*}
Without any restriction on the positive parameters~$a$ and~$R$,
the Jacobian~$f$ is always positive, 
and therefore $(\Omega_0,G)$ is an immersed manifold.
In fact, we have the uniform bounds
\begin{equation}\label{bounds}
  \frac{1}{5} \leq f(s,t)^2 \leq
  \left( 1 + \frac{a}{R} \right)^2 
  + \left( \frac{a}{2R} \right)^2
\end{equation}
valid for every $(s,t) \in \Omega_0$.
If one wants to make $(\Omega_0,G)$ embedded 
(and keep the geometric interpretation via a non-overlapping
M\"obius strip~$\Omega$), it is needed to impose the condition $a < R$.
For our purposes, however, it is enough to work in the more
general, immersed setting.

In view of the identification above, 
the Laplace--Beltrami operator $-\Delta_D^\Omega$ in $\sii(\Omega)$
can be identified with the operator 
\begin{equation*}
  H = -|G|^{-1/2} \partial_i |G|^{1/2} G^{ij} \partial_j
  \qquad \mbox{in} \qquad
  \sii(\Omega_0,|G(s,t)|^{1/2} \,\der s \, \der t)
  \,,
\end{equation*} 
subject to Dirichlet boundary conditions~\eqref{bc.Dirichlet}
and twisted periodic boundary conditions \eqref{bc1}--\eqref{bc2}.
Here we use the Einstein summation convention 
with the range of indices $i,j=1,2$,
$|G(s,t)| := \det(G) = f^2$ 
and~$G^{ij}$ denote the coefficients of the inverse metric~$G^{-1}$.
More specifically, $H$~is introduced as the self-adjoint operator 
associated with the closed quadratic form 
\begin{equation*}
\begin{aligned}
  h[\psi] &:= \int_{\Omega_0} 
  \overline{\partial_i\psi(s,t)} \, G^{ij}(s,t) \, \partial_j\psi(s,t)
  \, |G(s,t)|^{1/2} \, \der s \, \der t \,,
  \\
  \Dom(h) &:= \left\{ \psi \in W^{1,2}(\Omega_0) : \
  \mbox{$\psi$ satisfies \eqref{bc.Dirichlet} and \eqref{bc1}}
  \right\} \,.
\end{aligned}
\end{equation*}
Notice that the form domain coincides with the form domain 
of the fake M\"obius strip~$-\Delta_D^{\Omega_0}$ (\cf~\eqref{form.fake})
as well as the not-so-fake model~$H_\mathrm{eff}$.

Since we are interested in the limit $a \to 0$,
it is convenient to introduce the unitary transform 
$U: \sii(\Omega_0,f(s,t)\,\der s \, \der t) \to \sii(\Pi)$ by setting
\begin{equation*}
  (U\psi)(s,u) := \sqrt{a} \, \sqrt{f(s,a u)} \, \psi(s,a u) 
  \,,
\end{equation*}
where~$\Pi$ is the $a$-independent rectangle
\begin{equation}\label{rectangle}
  \Pi := (0,2\pi R) \times (-1,1)
  \,.
\end{equation}
Define the unitarily equivalent operator $L := U H U^{-1}$ in $\sii(\Pi)$,
which is the operator associated with the quadratic form
$l[\phi] := h[U^{-1}\phi]$, $\Dom(l) := U\Dom(h)$.

\begin{Proposition}\label{Prop.form}
One has  
\begin{equation*}
\begin{aligned}
  l[\phi] 
  &= \int_{\Pi} \frac{|\partial_1\phi(s,u)|^2}{f_a(s,u)^2} \, \der s \, \der u
  + \frac{1}{a^2} \int_{\Pi} |\partial_2\phi(s,u)|^2 \, \der s \, \der u
  + \int_{\Pi} V_a(s,u) \, |\phi(s,u)|^2 \, \der s \, \der u
  \,,
  \\
  \Dom(l)
  &= \left\{ \phi \in W^{1,2}(\Pi) : \
  \phi(s,\pm1) = 0 
  \mbox{ for $s \in (0,2 \pi R)$}
  \quad\text{and}\quad
  \phi(0,u) = \phi(2 \pi R,-u)
  \mbox{ for $u \in (-1,1)$}
  \right\} ,
\end{aligned}
\end{equation*}
where $f_a(s,u) := f(s,a u)$ and 
\begin{equation*}\label{potential}
  V_a := -\frac{5}{4} \frac{(\partial_1 f_a)^2}{f_a^4}
  + \frac{1}{2} \frac{\partial_1^2 f_a}{f_a^3}
  - \frac{1}{4} \frac{(\partial_2 f_a)^2}{a^2 f_a^2} 
  + \frac{1}{2} \frac{\partial_2^2 f_a}{a^2 f_a}
  \,.
\end{equation*}
\end{Proposition}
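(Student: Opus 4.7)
The plan is a direct computation of $l[\phi]=h[U^{-1}\phi]$ for smooth $\phi\in\mathrm{Dom}(l)$, followed by extension by density. Since $G=\mathrm{diag}(f^2,1)$, the form reads
\begin{equation*}
h[\psi]=\int_{\Omega_0}\Bigl(\frac{|\partial_1\psi|^2}{f}+f\,|\partial_2\psi|^2\Bigr)\der s\,\der t,
\end{equation*}
and the inverse transform is
\begin{equation*}
(U^{-1}\phi)(s,t)=a^{-1/2}\,f(s,t)^{-1/2}\,\phi(s,t/a).
\end{equation*}
Applying the chain rule, substituting, and performing the change of variables $t=au$ (whose Jacobian~$a$ absorbs the $a^{-1}$ prefactor coming from~$U$) produces, after grouping terms,
\begin{equation*}
h[U^{-1}\phi]=\int_\Pi\Bigl[\frac{|\partial_1\phi|^2}{f_a^2}+\frac{|\partial_2\phi|^2}{a^2}+\frac{(\partial_1 f_a)^2}{4f_a^4}|\phi|^2+\frac{(\partial_2 f_a)^2}{4a^2f_a^2}|\phi|^2\Bigr]\der s\,\der u+C[\phi],
\end{equation*}
where the remaining cross term is
\begin{equation*}
C[\phi]=-\int_\Pi\frac{\partial_1 f_a}{2f_a^3}\,\partial_1|\phi|^2\,\der s\,\der u-\frac{1}{a^2}\int_\Pi\frac{\partial_2 f_a}{2f_a}\,\partial_2|\phi|^2\,\der s\,\der u.
\end{equation*}
The transverse $1/a^2$ factor arises because the chain rule on $\phi(s,t/a)$ with respect to~$t$ produces an extra $1/a$ in each $\partial_2\psi$.

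Next I would integrate $C[\phi]$ by parts. In the bulk this produces
\begin{equation*}
\int_\Pi|\phi|^2\Bigl[\partial_1\!\Bigl(\frac{\partial_1 f_a}{2f_a^3}\Bigr)+\frac{1}{a^2}\partial_2\!\Bigl(\frac{\partial_2 f_a}{2f_a}\Bigr)\Bigr]\der s\,\der u=\int_\Pi|\phi|^2\Bigl[\frac{\partial_1^2 f_a}{2f_a^3}-\frac{3(\partial_1 f_a)^2}{2f_a^4}+\frac{1}{a^2}\frac{\partial_2^2 f_a}{2f_a}-\frac{1}{a^2}\frac{(\partial_2 f_a)^2}{2f_a^2}\Bigr]\der s\,\der u.
\end{equation*}
Adding this to the explicit $|\phi|^2$ coefficients above recovers $V_a$, the coefficient $-5/4$ arising as $\tfrac14-\tfrac32$ and the coefficient $-\tfrac14$ as $\tfrac14-\tfrac12$.

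The point requiring care is the vanishing of the two boundary contributions from the integration by parts. The $u=\pm1$ term is immediate from the Dirichlet condition $\phi(s,\pm1)=0$. The $s$-boundary contribution,
\begin{equation*}
\int_{-1}^1\Bigl[\frac{\partial_1 f_a(s,u)}{2f_a(s,u)^3}\,|\phi(s,u)|^2\Bigr]_{s=0}^{s=2\pi R}\der u,
\end{equation*}
is \emph{not} automatic from the twisted periodic condition $\phi(0,u)=\phi(2\pi R,-u)$ alone. However, differentiating $f^2=[1-(t/R)\cos(s/(2R))]^2+(t/(2R))^2$ in~$s$ produces an overall factor $\sin(s/(2R))$, which vanishes at both $s=0$ and $s=2\pi R$. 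Hence $\partial_1 f_a$ itself vanishes on the lateral boundary of~$\Pi$ and the integrand is identically zero. Finally, the identity for smooth~$\phi$ extends to all of $\mathrm{Dom}(l)$ by density, which is legitimate since $f_a$ is smooth and uniformly positive on~$\overline\Pi$ by~\eqref{bounds}, so that $V_a\in L^\infty(\Pi)$ for each fixed $a>0$.
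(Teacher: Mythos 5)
Your computation is correct and follows essentially the same route as the paper's proof: change of variables $t=au$, integration by parts on the cross terms, and the key observation that the $s$-boundary contributions vanish because $\partial_1 f_a(0,u)=0=\partial_1 f_a(2\pi R,u)$ (exactly the paper's argument), with the coefficient bookkeeping $\tfrac14-\tfrac32=-\tfrac54$ and $\tfrac14-\tfrac12=-\tfrac14$ matching. The only point you gloss over is the verification of the domain identity $\Dom(l)=U\Dom(h)$ — in particular that $U W^{1,2}(\Omega_0)=W^{1,2}(\Pi)$ (using the uniform bounds on $f$ and $\nabla f$) and that the twisted condition $\psi(0,t)=\psi(2\pi R,-t)$ transforms into $\phi(0,u)=\phi(2\pi R,-u)$, which relies on the symmetry $f(2\pi R,au)=f(0,-au)$ — but this is routine and the paper also dispatches it in a few lines.
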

\begin{proof}
The mapping property $U W^{1,2}(\Omega_0) = W^{1,2}(\Pi)$
is easily checked with help of~\eqref{bounds} 
and the fact that there is a constant~$C$
such that $|\nabla f(s,t)| \leq C$  
for every $(s,t) \in \Omega_0$.
If~$\psi$ satisfies the Dirichlet boundary conditions~\eqref{bc.Dirichlet},
then $\phi := U\psi$ clearly satisfies $\phi(s,\pm 1) = 0$ 
for almost every $s \in (0,2\pi R)$.
At the same time, if~$\psi$ satisfies the twisted periodic  
boundary condition~\eqref{bc1},
then~$\phi$ satisfies $\phi(0,u) = \phi(2 \pi R,-u)$
for almost every $u \in (-1,1)$ 
due to the symmetry property $f(2\pi R,a u) = f(0,-a u)$.
These prove the identity for the form domain $\Dom(l)$.

Now, let $\psi \in \Dom(h)$ and $\phi := U\psi$.
Then, making the change of variables and integrating by parts,  
we have
\begin{equation*}
\begin{aligned}
  h[\psi] 
  &= \int_{\Pi} \overline{\partial_i\phi} \, G_a^{ij} \, \partial_j\phi
  - \int_{\Pi} \frac{1}{2} \frac{\partial_i f_a}{f_a} G_a^{ij} \partial_j|\phi|^2
  + \int_{\Pi} \frac{1}{4}  
  \frac{\partial_i f_a}{f_a} G_a^{ij} \frac{\partial_i f_a}{f_a} |\phi|^2
  \\
  &= \int_{\Pi} \overline{\partial_i\phi} \, G_a^{ij} \, \partial_j\phi
  + \int_{\Pi} 
  \left[
  \frac{1}{2} \partial_j\!\left(\frac{\partial_i f_a}{f_a} G_a^{ij}\right)  
  + \frac{1}{4} \frac{\partial_i f_a}{f_a} G_a^{ij} \frac{\partial_i f_a}{f_a}
  \right]
  |\phi|^2
  \,,
\end{aligned}
\end{equation*}
where $G_a(s,u) := \diag(f_a^2,a^2)$
and the arguments $(s,u) \in \Pi$ of the functions 
and the measure of integration $\der s \, \der u$ are suppressed.
Here it is crucial that the boundary terms due to the integration by parts vanish.
This is easy to see when we integrate by parts with respect to the second variable,
because of the Dirichlet boundary conditions $\phi(s,\pm 1) = 0$
for almost every $s \in (0,2\pi R)$.
To see it also when we integrate by parts with respect to the first variable, 
we notice the property $\partial_1 f_a(0,u) = 0 = \partial_1 f_a(2\pi R,u)$
for every $u \in (-1,1)$.
Using the special form of~$G_a$, we get the desired formula.
\end{proof}

We observe that there exists an $a$-independent constant~$C$ such that
\begin{equation}\label{f.bounds}
  |f_a(s,u) - 1| \leq C a 
\end{equation}
and
\begin{equation*} 
\begin{aligned}
  |\partial_1 f_a(s,u)| &\leq C a \,,
  &&& |\partial_1^2 f_a(s,u)| &\leq C a^2 \,,
  \\
  \left|
  \frac{\partial_2 f_a(s,u)}{a} + \frac{1}{R} \cos\left(\frac{s}{2R}\right)  
  \right| &\leq C a \,,	
  &&&\left|
  \frac{\partial_2^2 f_a(s,u)}{a^2} - \frac{1}{(2R)^2}  
  \right| &\leq C a \,,
\end{aligned}
\end{equation*}
for every $(s,u) \in \Pi$.
Hereafter we use the convention that~$C$ 
denotes an $a$-independent constant
which may change its value from one line to another.
Consequently, there exists another constant~$C$ such that
\begin{equation}\label{potential.limit}
  |V_a(s, u) - V_\mathrm{eff}(s, au)| \leq C a
\end{equation}
for every $(s,u) \in \Pi$, where~$V_\mathrm{eff}$ is defined in~\eqref{eff}.

\begin{Remark}\label{Rem.Fermi}
Since~$\Omega$ is a ruled surface,
the straight lines $t \mapsto \mathscr{L}(s,t)$ are geodesics
for every fixed $s \in [0,2 \pi R)$.
Consequently, $(s,t)$ are Fermi coordinates 
and we have the simple formula 
\begin{equation*}
  K := - \frac{\partial_2^2 f}{f}
\end{equation*}
for the Gauss curvature of~$\Omega$.
At the same time, using the normal vector field $\partial_2\mathscr{L}$
(actually independent of the second variable)
along the unit-speed circle $s \mapsto \mathscr{L}(s,0)$ as a curve on~$\Omega$,
we have the formula
\begin{equation*}
  \kappa_g(s) 
  = \partial_1\mathscr{L}(s,0) \cdot \partial_2\mathscr{L}(s,0)
  = \frac{1}{R} \cos\left(\frac{s}{2R}\right)
\end{equation*}
for the geodesic curvature of the circle. 
Consequently,
\begin{equation*}
  V_\mathrm{eff}(s,t) =-\frac{1}{4} \kappa_g(s)^2 - \frac{1}{2} K(s,0)
  \,,
\end{equation*}
in agreement with the case of general thin strips \cite[Sec.~3]{K1}.
Notice that while the geodesic curvature has a jump 
at the endpoints of the M\"obius strip,
namely $\kappa_g(0) = - \kappa_g(2 \pi R)$,
which reflects the fact that~$\Omega$ is not orientable, 
the effective potential~$V_\mathrm{eff}$ extends 
to a \emph{smooth} function on~$\Omega$.
\end{Remark}

In order to compare the true M\"obius strip~$H$
(which is unitarily equivalent to $L$ in $\sii(\Pi)$)
with the not-so-fake model~$H_\mathrm{eff}$ in $\sii(\Omega_0)$,
we also map the latter to an operator 
in the $a$-independent Hilbert space $\sii(\Pi)$.
This is achieved by the unitary transform
$U_\mathrm{eff}: \sii(\Omega_0) \to \sii(\Pi)$ that acts as
\begin{equation*}
  (U_\mathrm{eff}\psi)(s,u) := \sqrt{a} \, \psi(s,a u) 
  \,.
\end{equation*}
It is elementary to check that the unitarily equivalent operator
$L_\mathrm{eff} := U_\mathrm{eff} H_\mathrm{eff} (U_\mathrm{eff})^{-1}$
is associated with the quadratic form 
\begin{equation*}
\begin{aligned}
  l_\mathrm{eff}[\phi] 
  &:= \int_{\Pi} |\partial_1\phi(s,u)|^2 \, \der s \, \der u
  + \frac{1}{a^2} \int_{\Pi} |\partial_2\phi(s,u)|^2 \, \der s \, \der u
  + \int_{\Pi} V_\mathrm{eff}(s, au) \, |\phi(s,u)|^2 \, \der s \, \der u
  \,,
  \\
  \Dom(l_\mathrm{eff})
  &:= \Dom(l)
  \,,
\end{aligned}
\end{equation*}
where $\Dom(l)$ is given in Proposition~\ref{Prop.form}.

Now we are in a position to establish the norm-resolvent convergence.

\begin{Theorem}\label{Thm.nrs}
For every $z \not\in \sigma(L-E_1) \cup \sigma(L_\mathrm{eff}-E_1)$,
there exists an $a$-independent constant~$C$ such that,
for all positive~$a$,
\begin{equation}\label{nrs.thm}
  \big\|(L-E_1-z)^{-1} 
  - (L_\mathrm{eff}-E_1-z)^{-1} \big\| \leq C a
  \,,
\end{equation}
where $\|\cdot\|$ denotes the operator norm in $\sii(\Pi)$.
\end{Theorem}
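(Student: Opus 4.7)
The plan is to establish the norm-resolvent convergence by a purely form-theoretic comparison, exploiting the fact (noted just before the statement) that $\Dom(l) = \Dom(l_\mathrm{eff})$. Set $R := (L-E_1-z)^{-1}$ and $R_\mathrm{eff} := (L_\mathrm{eff}-E_1-z)^{-1}$. For arbitrary $\psi,\tilde\psi \in \sii(\Pi)$, let $\phi := R\psi$ and $\tilde\phi := R_\mathrm{eff}^{\,*}\tilde\psi$, so that both lie in the common form domain. A direct manipulation (using $(L-E_1-z)\phi=\psi$ and $(L_\mathrm{eff}-E_1-\bar z)\tilde\phi=\tilde\psi$ in the form sense) yields the identity
\begin{equation*}
  \langle \tilde\psi,(R-R_\mathrm{eff})\psi\rangle = l_\mathrm{eff}(\tilde\phi,\phi)-l(\tilde\phi,\phi).
\end{equation*}

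Next, by Proposition~\ref{Prop.form} the right-hand side reduces to
\begin{equation*}
  \int_\Pi\Bigl(1-\frac{1}{f_a^2}\Bigr)\overline{\partial_1\tilde\phi}\,\partial_1\phi\,\der s\,\der u
  + \int_\Pi\bigl(V_\mathrm{eff}(s,au)-V_a(s,u)\bigr)\overline{\tilde\phi}\,\phi\,\der s\,\der u,
\end{equation*}
which, thanks to the pointwise bounds $|1-1/f_a^2|\le C a$ deduced from~\eqref{f.bounds} and $|V_a-V_\mathrm{eff}|\le Ca$ from~\eqref{potential.limit}, is estimated via Cauchy--Schwarz by
$Ca\bigl(\|\partial_1\tilde\phi\|\,\|\partial_1\phi\| + \|\tilde\phi\|\,\|\phi\|\bigr)$.

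It then remains to obtain a uniform-in-$a$ bound $\|\phi\|+\|\partial_1\phi\|\le C\|\psi\|$ (and analogously for $\tilde\phi$). The $L^2$ bound follows from $\|R\|\le C$, which holds under the assumption $z\notin\sigma(L-E_1)$ together with the $a$-independent semiboundedness $L-E_1\ge -C$ (itself a consequence of the Poincaré inequality~\eqref{Poincare} in the transverse variable, the uniform boundedness of $V_a$, and the uniform lower bound $1/f_a^2\ge 1/C$). To control the tangential derivative, combine the identity $l[\phi] = \mathrm{Re}\langle\phi,(E_1+z)\phi+\psi\rangle\le C\|\psi\|^2$ with the lower estimate
\begin{equation*}
  l[\phi] \;\ge\; \frac{1}{\sup f_a^2}\|\partial_1\phi\|^2 + E_1\|\phi\|^2 - C\|\phi\|^2,
\end{equation*}
in which the shift $E_1\|\phi\|^2$ is absorbed on both sides; using $\sup f_a^2\le C$ one concludes $\|\partial_1\phi\|^2\le C\|\psi\|^2$. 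Substituting these bounds into Step~2 and taking the supremum over $\tilde\psi$ of norm one yields the claim~\eqref{nrs.thm}.

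The main obstacle is Step~3, i.e.\ the uniform $H^1$-type control of $\phi$ in the singular regime $a\to 0$: the form $l$ diverges like $1/a^2$ on the transverse derivative, so the renormalisation by $E_1$ must be matched exactly against the Poincaré constant in order to leave a coercive remainder controlling $\|\partial_1\phi\|$. Everything else (the form identity, the pointwise bounds on $f_a$ and $V_a-V_\mathrm{eff}$, and Cauchy--Schwarz) is already essentially in place in the preceding section.
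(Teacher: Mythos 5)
Your proposal is correct and follows essentially the same route as the paper's proof: the weak form-difference identity $\langle\tilde\psi,(R-R_\mathrm{eff})\psi\rangle=l_\mathrm{eff}(\tilde\phi,\phi)-l(\tilde\phi,\phi)$ on the common form domain, the pointwise $O(a)$ bounds on $1-f_a^{-2}$ and $V_a-V_\mathrm{eff}$ from \eqref{f.bounds} and \eqref{potential.limit}, and the uniform-in-$a$ bounds on $\|\phi\|$ and $\|\partial_1\phi\|$ obtained by matching the renormalisation $E_1$ exactly against the transverse Poincar\'e constant are all precisely the ingredients of the paper's argument. The one point where you are slightly less careful is the claim that $\|R\|\le C$ uniformly in $a$ for an \emph{arbitrary} $z\notin\sigma(L-E_1)$: since the spectrum depends on $a$, semiboundedness alone only gives this for $z$ negative with $|z|$ large (or $\operatorname{Im}z\neq0$), and the paper accordingly first proves the estimate for such a fixed $z$ and then extends to general $z$ via \cite[Rem.~IV.3.13]{Kato}.
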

\begin{proof}
Using~\eqref{Poincare} and an elementary estimate of~$V_\mathrm{eff}$, 
we have 
\begin{equation*}
  L_\mathrm{eff} -E_1 \geq - \frac{1}{8R^2}
  \,. 
\end{equation*}
Consequently, any negative~$z$ with sufficiently large~$|z|$ 
(with the largeness independent of~$a$)
belongs to the resolvent set of~$H_\mathrm{eff}$.
Using~\eqref{potential.limit}, the same conclusion holds for~$L$. 
Fixing such an $a$-independent~$z$ 
and given arbitrary functions $f,g \in \sii(\Pi)$, 
let us consider the resolvent equations
\begin{equation}\label{res.eqs}
  (L_\mathrm{eff}-E_1-z)\phi = f
  \qquad \mbox{and} \qquad
  (L-E_1-z)\psi = g
  \,.
\end{equation}
The first equation implies 
\begin{equation*}
  l_\mathrm{eff}[\phi] -E_1 \|\phi\|^2 -z\|\phi\|^2 
  = (\phi,f)
  \leq \|\phi\| \|f\|
  \,,
\end{equation*}
where~$\|\cdot\|$ and $(\cdot,\cdot)$ denote 
the norm and inner product of $\sii(\Pi)$.
Recalling~\eqref{Poincare}, we obtain
\begin{equation}\label{estimates1}
  \|\phi\| \leq C \|f\|
  \qquad\mbox{and}\qquad
  \|\partial_1\phi\|^2 \leq C \|f\|^2,
\end{equation}
where $C := [|z|-(8R^2)^{-1}]^{-1}$.
Similarly, using in addition~\eqref{f.bounds} and~\eqref{potential.limit},
the second equation of~\eqref{res.eqs} yields
\begin{equation}\label{estimates2}
  \|\psi\| \leq C \|g\|
  \qquad\mbox{and}\qquad
  \|\partial_1\psi\|^2 \leq C \|f\|^2
\end{equation}
with some $a$-independent constant~$C$. 
Let us now write
\begin{align}\label{diff.res} 
  \big(f,[(L-E_1-z)^{-1}
  -(L_\mathrm{eff}-E_1-z)^{-1}]g\big)
  &= (f,(L-E_1-z)^{-1}g) 
  - ((L_\mathrm{eff}-E_1-z)^{-1}f,g) 
  \nonumber \\
  &= \big((L_\mathrm{eff}-E_1-z)\phi,\psi\big) 
  - \big(\phi,(L-E_1-z)\psi\big)
  \nonumber \\ 
  &= l_\mathrm{eff}(\phi,\psi) - l(\phi,\psi) 
  \,,
\end{align}
where $l(\cdot,\cdot)$ (respectively, $l_\mathrm{eff}(\cdot,\cdot)$)
denotes the sesquilinear form associated with $l[\cdot]$
(respectively, $l_\mathrm{eff}[\cdot]$). 
The last identity employs the fact that the form domains 
of~$L$ and $L_\mathrm{eff}$ coincide.
We have
\begin{align}\label{diff.forms} 
  |l(\phi,\psi) - l_\mathrm{eff}(\phi,\psi)|
  &= \left|
  \big(\partial_1\phi,[f_a^{-2}-1]\partial_1\psi\big) 
  + \big(\phi,[V_a-V_\mathrm{eff}]\psi\big) 
  \right|
  \nonumber \\ 
  &\leq C a \|\partial_1\phi\| \|\partial_1\psi\|
  + C a \|\phi\| \|\psi\|
  \nonumber \\
  &\leq C a \|f\| \|g\|  
  \,,
\end{align}  
where the first estimate follows by~\eqref{f.bounds} and~\eqref{potential.limit}
together with the Schwarz inequality
and the second inequality employs~\eqref{estimates1} and~\eqref{estimates2}.
Combining~\eqref{diff.res} and~\eqref{diff.forms},
we obtain~\eqref{nrs.thm}. 
In view of~ \cite[Rem.~IV.3.13]{Kato},
the estimate~\eqref{nrs.thm} extends to any~$z$ in the resolvent
sets of~$L$ and~$L_\mathrm{eff}$. 
\end{proof}
As a particular consequence of~\eqref{nrs.thm} and~\cite[Chapter II, Corrolary 2.3]{GohbergKrein},
we get the convergence of eigenvalues of~$L$ 
to the eigenvalues of~$L_\mathrm{eff}$.
More specifically, 
for fixed real $z \not\in \sigma(L) \cup \sigma(L_\mathrm{eff})$
and every $j \in \Nat^*$,
we have
\begin{equation*}
  \left|
  \frac{1}{\lambda_j(L)-E_1-z} - \frac{1}{\lambda_j(L_\mathrm{eff})-E_1-z} 
  \right|
  \leq C a
  \,,
\end{equation*}
where the $a$-independent constant~$C$ is the same as in~\eqref{nrs.thm}
and $\{\lambda_j(A)\}_{j\in\Nat^*}$ denotes 
the non-decreasing sequence of eigenvalues 
of a self-adjoint operator~$A$ with compact resolvent,
where each eigenvalue is repeated according to its multiplicity
(\cf~\cite[Sec.~4.5]{Davies}).
The eigenvalues $\lambda_j(L_\mathrm{eff})$ are known explicitly, see~\eqref{spec.eff}.
In particular, given any $j \in \Nat^*$,
the shifted eigenvalue $\lambda_j(L_\mathrm{eff})-E_1$ is independent of~$a$
for all sufficiently small~$a$ (with the smallness depending on~$j$)
and we thus get the following result.
\begin{Corollary}\label{Cor.eigenvalues}
For every $k \in \Nat^*$, 
there exist positive $a$-independent constants~$C_k$ and~$a_k$ such that,
for all $a \leq a_k$,
$$
  \forall j \in \{1,\dots,k\} \,, \qquad
  |\lambda_j(L) - \lambda_j(L_\mathrm{eff})| \leq C_k \, a 
  \,.
$$
\end{Corollary}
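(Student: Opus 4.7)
The plan is to convert the resolvent-level statement that precedes the corollary into a direct eigenvalue bound, leveraging the fact that the lowest $k$ eigenvalues of $L_\mathrm{eff}$ become $a$-independent for small~$a$. Fix $k \in \Nat^*$. From the explicit formula~\eqref{spec.eff}, every eigenvalue of $H_\mathrm{eff}$ satisfies
$$
  \lambda - E_1 = \left(\frac{1}{2R}\right)^{\!2} c_m + (n^2-1)\left(\frac{\pi}{2a}\right)^{\!2},
$$
where $c_m$ stands for $a_m(-1/4)$ or $b_m(-1/4)$. For $n=1$ (forcing $m$ even by the parity constraint $m+n$ odd) this shift is $a$-independent, while for $n\geq 2$ it blows up like $a^{-2}$. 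Hence there exists $a_k>0$, depending only on $k$ and $R$, such that for every $a \leq a_k$ the first $k$ eigenvalues of $L_\mathrm{eff}$ all correspond to $n=1$. Writing $\mu_j := \lambda_j(L_\mathrm{eff})-E_1$ for $j=1,\dots,k$, each $\mu_j$ is a fixed, $a$-independent number bounded above by a constant $M_k$ depending only on $k$ and $R$.

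Next I would fix a real number $z$ independent of $a \leq a_k$ and of $j\leq k$, chosen sufficiently negative so that $z \notin \sigma(L-E_1) \cup \sigma(L_\mathrm{eff}-E_1)$ uniformly (the lower bound $L_\mathrm{eff}-E_1\geq -(8R^2)^{-1}$ from the proof of Theorem~\ref{Thm.nrs}, together with the analogous bound for $L$ coming from~\eqref{potential.limit}, makes this possible). The consequence of Theorem~\ref{Thm.nrs} and \cite[Ch.~II, Cor.~2.3]{GohbergKrein} recalled just above the corollary then yields
$$
  \left|\frac{1}{\lambda_j(L)-E_1-z} - \frac{1}{\mu_j-z}\right| \leq C a,
$$
with $C$ independent of $a\leq a_k$ and $j\leq k$.

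Finally I would invert this resolvent-level estimate via the elementary identity
$$
  |\lambda_j(L)-E_1-\mu_j|
  = \bigl|\lambda_j(L)-E_1-z\bigr|\,|\mu_j-z|\,
  \left|\frac{1}{\lambda_j(L)-E_1-z} - \frac{1}{\mu_j-z}\right|.
$$
Because $|\mu_j - z|\leq M_k + |z|$ is controlled purely by $k$ and $R$, the only genuine obstacle is to bound $|\lambda_j(L)-E_1-z|$ uniformly in $a\leq a_k$ and $j\leq k$. I would handle this by a bootstrap from the very same resolvent estimate: it forces $1/(\lambda_j(L)-E_1-z)$ to lie within distance $Ca_k$ of the nonzero number $1/(\mu_j-z)$, so, after possibly shrinking $a_k$, the reciprocal is bounded away from zero, hence $|\lambda_j(L)-E_1-z|$ is bounded by a constant depending only on $k$ and $R$. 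Inserting these bounds gives $|\lambda_j(L)-\lambda_j(L_\mathrm{eff})| \leq C_k a$ for all $j \leq k$ and $a \leq a_k$, which is the claim.
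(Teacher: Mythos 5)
Your proposal is correct and follows essentially the same route as the paper: the norm-resolvent estimate of Theorem~\ref{Thm.nrs} combined with the Gohberg--Krein eigenvalue inequality for the compact resolvents, plus the observation that the shifted eigenvalues $\lambda_j(L_\mathrm{eff})-E_1$ are $a$-independent (and correspond to $n=1$) for small $a$. The paper leaves the final inversion from the reciprocal bound to the eigenvalue bound implicit; your bootstrap argument fills in exactly that step correctly.
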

The convergence in norm of corresponding spectral projections
also follows.

\section{Numerical results}
\label{Sec.numerics}
%
In this closing section, we will numerically investigate properties of the eigenvalues and eigenfunctions of the true and not-so-fake M\"obius models described above.
In view of the unitary equivalence described in Section~\ref{Sec.nrs},
our starting point is the operator $L$ associated with the quadratic form~$l$ 
introduced in Proposition~\ref{Prop.form}.
It corresponds to the operator
\begin{equation*}\label{eq.L}
  L = - \partial_1 \frac{1}{f_a^2} \partial_1 - \frac{1}{a^2} \partial_2^2 + V_a
  \qquad\mbox{in}\qquad 
  \sii(\Pi)
  \,, 
\end{equation*}
subject to Dirichlet boundary conditions 
\begin{equation}\label{bc.Dirichlet.num}
  \psi(s,\pm 1) = 0 
\end{equation}
for almost every $s \in (0,2\pi R)$
and twisted periodic boundary conditions
\begin{align}
  \psi(0,u) &= \psi(2 \pi R,-u) \,, 
  \label{bc1.num}
  \\ 
  \partial_1\psi(0,u) &= \partial_1\psi(2 \pi R,-u) \,,
  \label{bc2.num}
\end{align}
for almost every $u \in (-1,1)$.
The rectangular domain is defined in~\eqref{rectangle}
and the functions~$V_a$ and~$f_a$ can be found in Proposition~\ref{Prop.form}.

In order to numerically analyse solutions of the eigenvalue problem $L f = \lambda f$, 
we employ a particular orthonormal basis of $L^2(\Pi)$ 
formed by eigenfunctions of the fake model,
\[
  \mathcal{B} := \{\Psi_{m,n}\}_{\substack{m\in\Int,n\in\Nat^* \\ m+n \ \text{is odd}}} \subset L^2(\Pi),
\]
where $\Psi_{m,n}(s,u) := \sqrt{a} \, \psi_{m,n}(s,au)$ 
and $\psi_{m,n}$ is defined in~\eqref{eq.psimn}.
For convenience, let us arrange 
the eigenvalues~\eqref{spec.fake} of the fake model
in a non-decreasing sequence $\{ \lambda_j^{(\mathrm{fake})} \}_{j\in\Nat^*}$,
where each eigenvalue is repeated according to its multiplicity.
The set of corresponding eigenfunctions will be denoted by 
$\mathcal{B} = \{ \Psi_j \}_{j\in\Nat^*}$.
Thus 
$$
  \left(- \partial_1^2 - \frac{1}{a^2} \partial_2^2\right) \Psi_j
  = \lambda_j^{(\mathrm{fake})} \Psi_j 
$$
and $\Psi_j$ with $j \in \Nat^*$ obey
the Dirichlet~\eqref{bc.Dirichlet.num}
as well as twisted periodic boundary conditions~\eqref{bc1.num}--\eqref{bc2.num}.
Now let us fix a large $N \in \Nat^*$ and consider the orthogonal projection $P_N$ onto the linear span of the truncated orthonormal set $\mathcal{B}_N = \{ \Psi_j \}_{j=1}^N$.
Instead of the full eigenvalue problem for $L$ 
we solve the finite-dimensional eigenvalue problem for $P_N L P_N$.
In other words, 
we compute eigenvalues and eigenvectors of the matrix $M \in \mathbb{R}^{N,N}$ 
with entries
\[
  M_{jk} = ( \Psi_j, L \Psi_k ), 
  \qquad j,k = 1,\ldots,N \,,
\]
where $(\cdot,\cdot)$ denotes the inner product in~$\sii(\Pi)$.
The resulting eigenvalues 
$\tilde{\lambda}^{(\mathrm{true})}_1 \leq \tilde{\lambda}^{(\mathrm{true})}_2 \leq \cdots \leq \tilde{\lambda}^{(\mathrm{true})}_N$ are upper estimates 
(\cf~\cite[Sec.~4.5]{Davies})
of the true M\"obius eigenvalues, \ie{} we have $\lambda^{(\mathrm{true})}_j \leq \tilde{\lambda}^{(\mathrm{true})}_j$, $j=1,2,\ldots,N$.
Corresponding approximations to the true eigenvectors are then
\[
  \tilde{f}_k = \sum_{j=1}^N c_j^{(k)} \Psi_j,
\]
where $c^{(k)} = (c^{(k)}_1,\ldots,c^{(k)}_N)^T$ is an eigenvector of $M$ with eigenvalue $\tilde{\lambda}^{(\mathrm{true})}_k$.
Note that the entries of $M$ have to be evaluated numerically, in particular we have the following expression
\begin{align*}
  \langle \Psi_{m,n}, L \Psi_{k,\ell} \rangle =\ &
  \frac{mk}{4R^2} \int_\Pi \frac{1}{f_a(s,u)^2} \Psi_{-m,n}(s,u) \Psi_{-k,\ell}(s,u) 
  \,\mathrm{d}s\,\mathrm{d}u +
  \frac{1}{a^2} \left( \frac{n\pi}{2} \right)^2 \delta_{mk}\delta_{n\ell} + \\
  &+\int_\Pi V_a(s,u) \Psi_{m,n}(s,u) \Psi_{k,\ell}(s,u) \,\mathrm{d}s\,\mathrm{d}u,
\end{align*}
where $\delta_{mn}$ is the usual Kronecker symbol.
\begin{Remark}
  Before we conclude this section with the presentation of our numerical results, let us  make two remarks.  
  The matrix $M$ is a $N \times N$ real symmetric and full matrix (\ie{} not a sparse one).
  Its eigenvalues and eigenvectors can be computed numerically using one's favorite computer algebra system, we use the Julia programming environment \cite{Julia}.
  All of our code is available in a public GitHub repository\footnote{\url{https://github.com/kalvotom/moebius}}.
  The reason for taking the eigenvectors of the fake model instead of not-so-fake model is that it is much easier to work with sines and cosines instead of Mathieu functions. 
\end{Remark}
Let us conclude this section with a short review of the results of our numerical experiments.
In Figure \ref{fig.eigenvectors.1} one can see the eigenfunctions of the operator $L$ with $a = 1.3$ and $R = 18/(2\pi)$.
It is also interesting to visualize eigenvectors as living on the original M\"{o}bius strip.
This is the purpose of Figure \ref{fig.eigenvectors.3d.1} where we have taken $a = 0.75$ and $R = 13.2/(2\pi)$.
Table~\ref{tab.eigenvalues} contains approximations of the corresponding first twenty eigenvalues of the operator $L$.
Finally, we wish to test the expected asymptotic behavior of the not-so-fake and true model as the thickness of the M\"obius strip $a$ goes to zero (recall Theorem~\ref{Thm.nrs}).
Let $\lambda_n^{(\textrm{not-so-fake})}(a)$
and $\lambda_n^{(\mathrm{true})}(a)$ denote the $n$th eigenvalue of 
the operator $L_{\mathrm{eff}}$ and $L$, respectively.
The former is known explicitly (see equation~\eqref{spec.eff}), the latter can be computed numerically as described in the previous paragraphs.
The resulting approximation well be denoted by $\tilde{\lambda}_n^{(\mathrm{true})}$.

In  order to test the convergence rate we plot the ratio
\[
  \frac{\big| \lambda_n^{(\textrm{not-so-fake})}
  - \tilde{\lambda}_n^{(\textrm{true})} \big|}{a^2}
\]
for $n = 1,\ldots,20$ and $a$ ranging from $0.01$ to $1.5$ with $R = 18/(2\pi)$, see Figure~\ref{fig.diff.ratio}.
In this particular case we also check convergence of the corresponding normalized eigenvectors.
In Figure~\ref{fig.eigenvectors.convergence} we plot the ratio
\[
  \frac{\big\|\tilde{f}_n - f_n^{(\textrm{not-so-fake})}\big\|_{L^2(\Pi)}}{a^2}.
\]
Both of these experiments suggest that the convergence rate is quadratic as $a$ goes to $0$ and so the Corollary~\ref{Cor.eigenvalues} is only a good upper estimate.
%

\subsection*{Acknowledgment}
%
The research of D.K.\ was partially supported 
by the GACR grant No.\ 18-08835S.
The research of T.K.\ was supported by the Ministry of Education, Youth and Sports of the Czech Republic project no. CZ.02.1.01/0.0/0.0/16\_019/0000778.

%
\bibliographystyle{amsplain}
\providecommand{\bysame}{\leavevmode\hbox to3em{\hrulefill}\thinspace}
\providecommand{\MR}{\relax\ifhmode\unskip\space\fi MR }
\providecommand{\MRhref}[2]{%
  \href{http://www.ams.org/mathscinet-getitem?mr=#1}{#2}
}
\providecommand{\href}[2]{#2}

\newpage

\begin{table}
  {\centering
  \begin{tabular}{lll|ll}
    $n$ & $\tilde{\lambda}^{(\mathrm{true})}_n$ & $\big\| L \tilde{f}_n - \tilde{\lambda}^{(\mathrm{true})}_n \tilde{f}_n \big\|_{L^2(\Pi)}$ & $\lambda^{(\mathrm{not-so-fake})}_n$ & $\lambda^{(\mathrm{fake})}_n$ \\ \hline

1 & 4.387440201465426 & 0.0011360336639659758 & 4.384732657634105 & 4.386490844928603\\
2 & 4.619975308169118 & 0.002935713704540701 & 4.612770845791257 & 4.613065785265825\\
3 & 4.6210487512326965 & 0.0034765508104058836 & 4.61452884109396 & 4.613065785265825\\
4 & 5.311812674844678 & 0.009392208389967776 & 5.292908532928366 & 5.292790606277488\\
5 & 5.311812691949888 & 0.009394620959796087 & 5.292908724918286 & 5.292790606277488\\
6 & 6.45928381512197 & 0.01784426849679324 & 6.425715883668621 & 6.425665307963595\\
7 & 6.459283815177474 & 0.017844019253709244 & 6.425715883670497 & 6.425665307963595\\
8 & 8.054793717112888 & 0.02782741553208048 & 8.011717987565671 & 8.011689890324144\\
9 & 8.054793717134626 & 0.02782929178936725 & 8.011717987565671 & 8.011689890324144\\
10 & 10.087710686170643 & 0.07623428743234176 & 10.050882233363655 & 10.050864353359135\\
11 & 10.087710686180136 & 0.07623425520146826 & 10.050882233363655 & 10.050864353359135\\
12 & 12.544971054834159 & 0.12299616532523619 & 12.543201075519463 & 12.543188697068569\\
13 & 12.544971054880232 & 0.12299618315689324 & 12.543201075519463 & 12.543188697068569\\
14 & 15.411764278613166 & 0.15141422162634224 & 15.48867199897889 & 15.488662921452445\\
15 & 15.411764278618152 & 0.1514142253244023 & 15.48867199897889 & 15.488662921452445\\
16 & 17.59842628782262 & 0.005712405600055002 & 17.58801732145255 & 17.602607114798715\\
17 & 17.622050913758347 & 0.003779453318856355 & 17.616311563972907 & 17.602607114798715\\
18 & 18.084500866091076 & 0.017503712257836673 & 18.055964587231244 & 18.05575699547316\\
19 & 18.084502386722757 & 0.01752620443224552 & 18.055992211502907 & 18.05575699547316\\
20 & 18.672740544194298 & 0.18297451968432338 & 18.887293968147098 & 18.88728702651077

  \end{tabular}
  \par}
  \caption{
    First twenty approximations of the eigenvalues of the true model $\tilde{\lambda}^{(\mathrm{true})}_n$ and norms of residues $L \tilde{f}_n - \tilde{\lambda}^{(\mathrm{true})}_n \tilde{f}_n$, where $\tilde{f}_n$ is the normalised eigenvector of $M$ corresponding to $\tilde{\lambda}^{(\mathrm{true})}_n$.
    Value of parameters are $a = 0.75$, $R = 13.2/(2\pi)$, and $N = 82$.
    For convenience we also present eigenvalues of the not-so-fake $\lambda^{(\mathrm{not-so-fake})}_n$ and fake model $\lambda^{(\mathrm{fake})}_n$, see~\eqref{spec.eff} and~\eqref{spec.fake}, respectively.
  }
  \label{tab.eigenvalues}
\end{table}
\begin{figure}
  \ContinuedFloat*
  {\centering
    \includegraphics[scale=0.7]{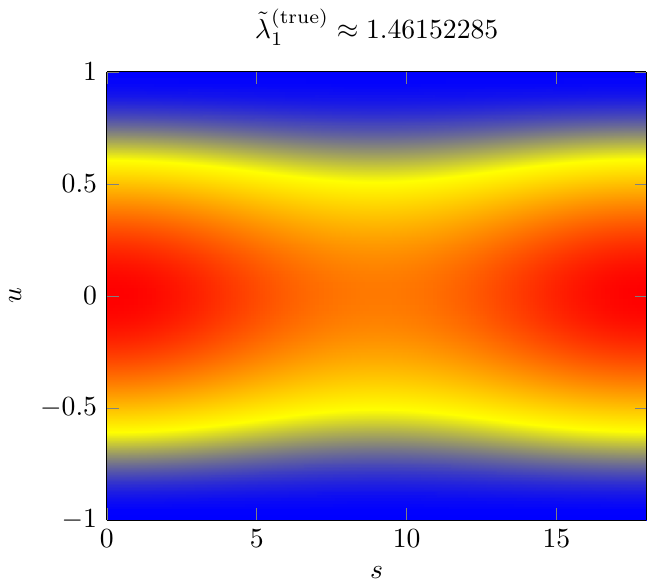}
    \hfill
    \includegraphics[scale=0.7]{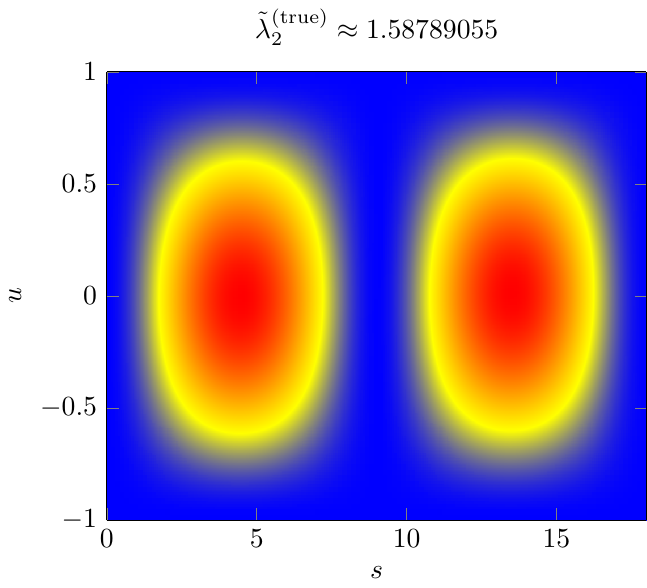}
    \hfill
    \includegraphics[scale=0.7]{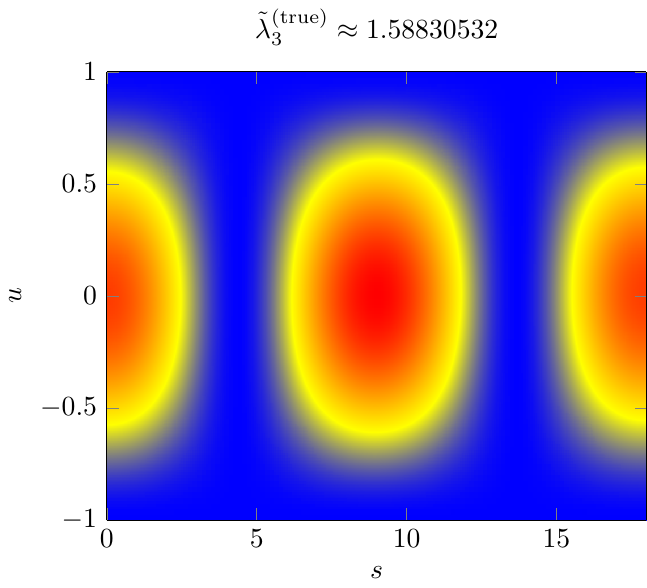}
    \par\medskip
    \includegraphics[scale=0.7]{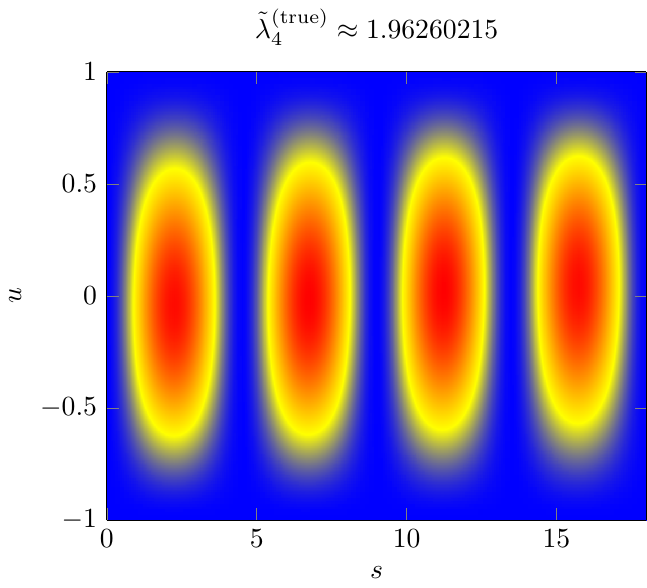}
    \hfill
    \includegraphics[scale=0.7]{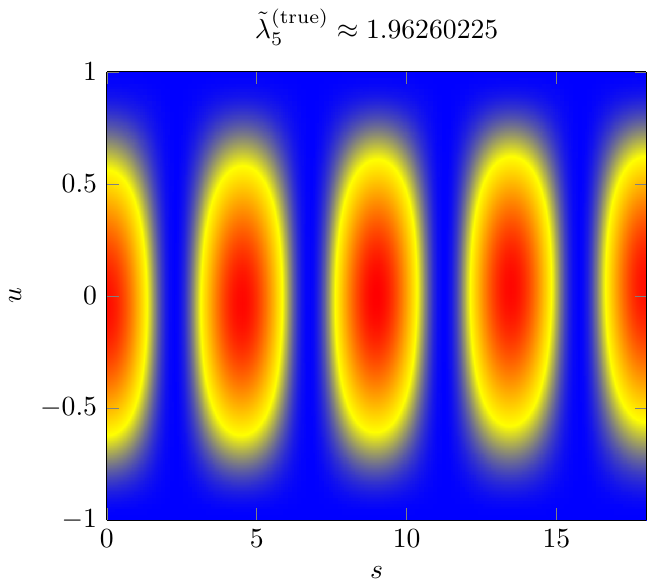}
    \hfill
    \includegraphics[scale=0.7]{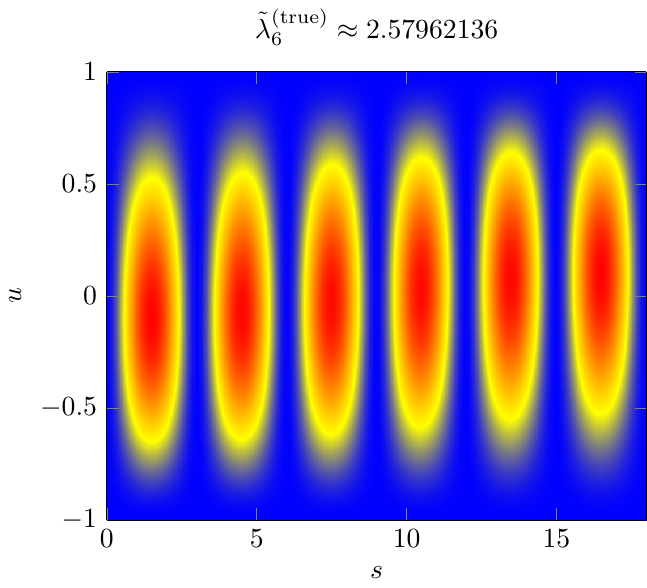}
    \par\medskip
    \includegraphics[scale=0.7]{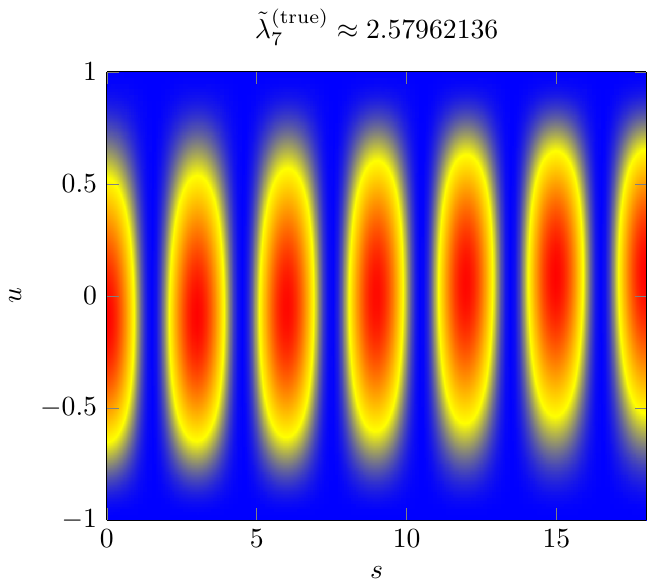}
    \hfill
    \includegraphics[scale=0.7]{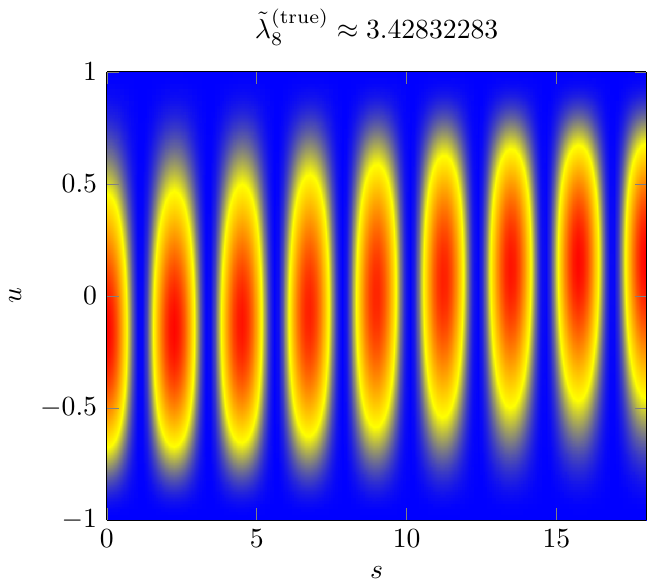}
    \hfill
    \includegraphics[scale=0.7]{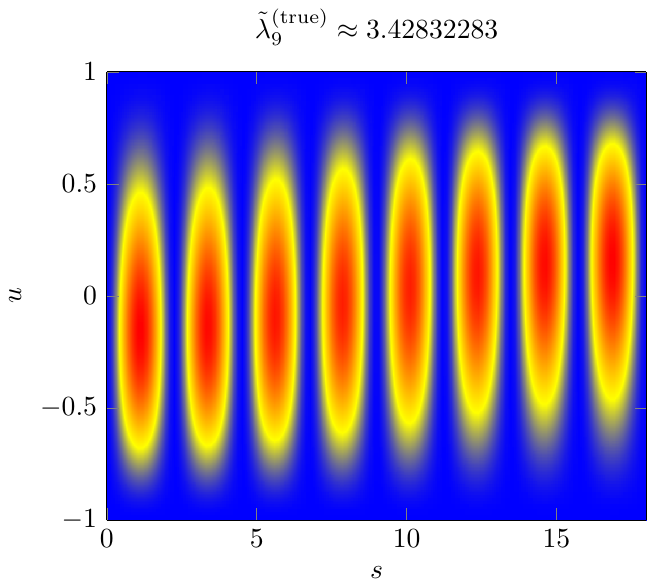}
    \par\medskip
    \includegraphics[scale=0.7]{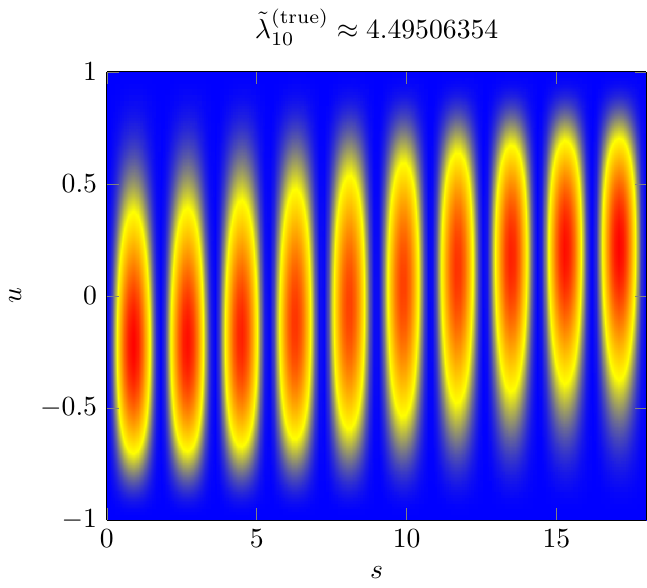}
  \par}
  \caption{
    Numerical approximations $\tilde{f}_k$ of eigenvalues and eigenfunctions of the operator~$L$ for $k=1,2,\ldots,10$.
    We are plotting probability densities $|\tilde{f}_k|^2$, blue and red color corresponds to zero and maximal value, respectively.
    Parameters of the numerical computation are $a = 1.3$, $R = 18/(2\pi)$, and $N = 96$.
  }
  \label{fig.eigenvectors.1}
\end{figure}
\begin{figure}
  \ContinuedFloat
  {\centering
    \includegraphics[scale=0.7]{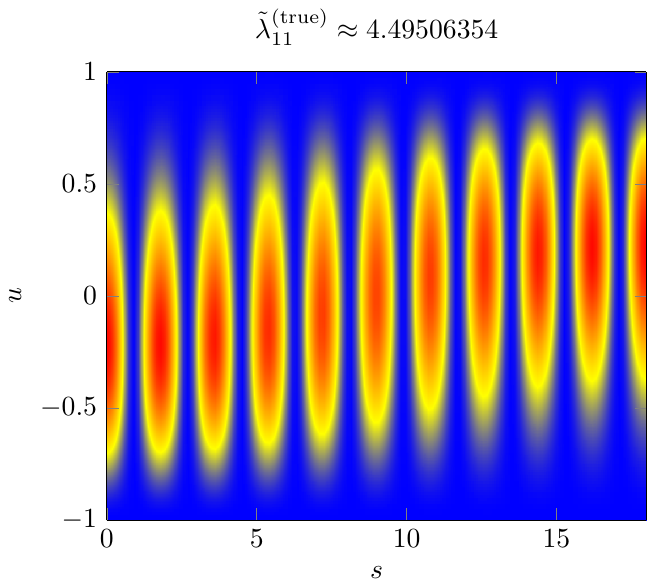}
    \hfill
    \includegraphics[scale=0.7]{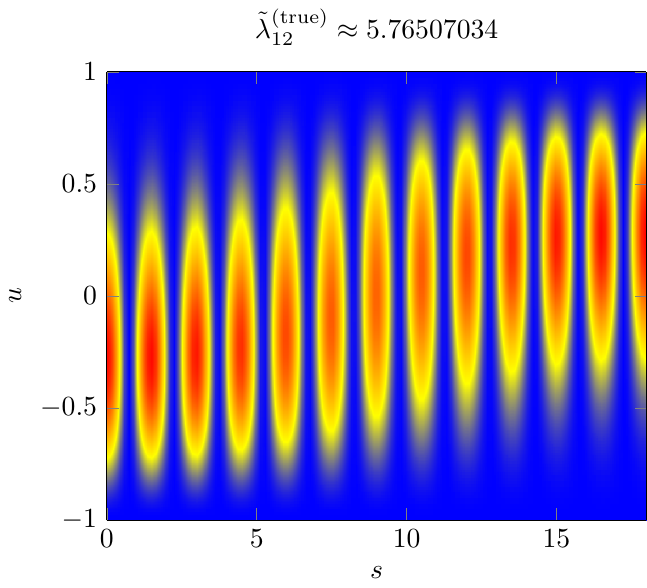}
    \hfill
    \includegraphics[scale=0.7]{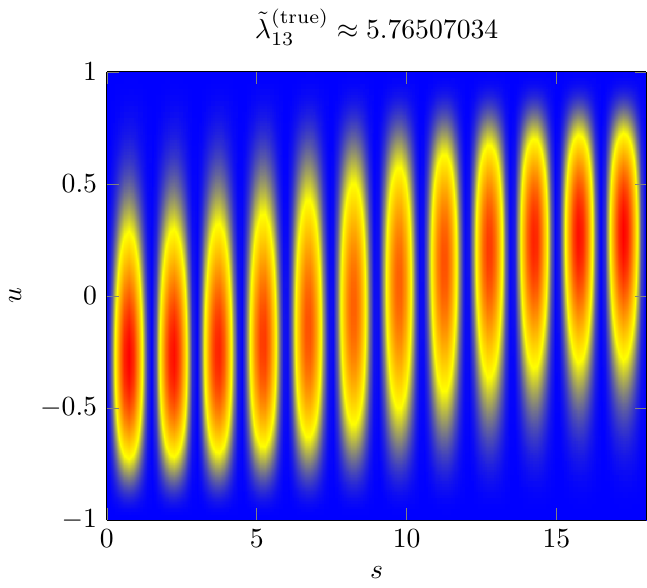}
    \par\medskip
    \includegraphics[scale=0.7]{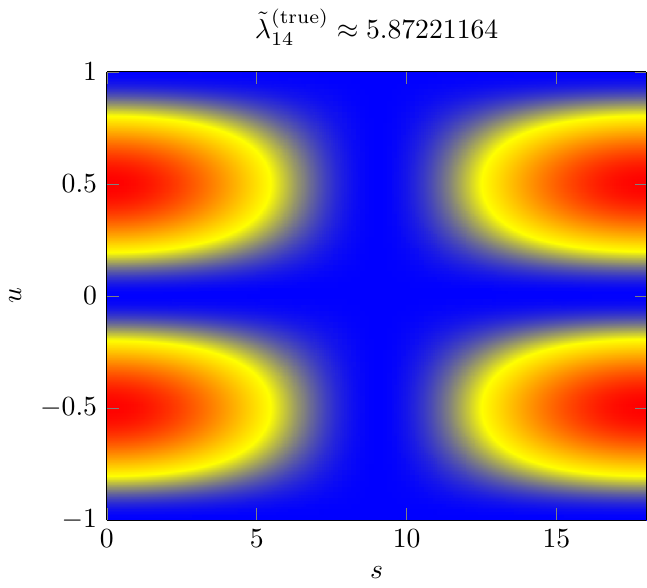}
    \hfill
    \includegraphics[scale=0.7]{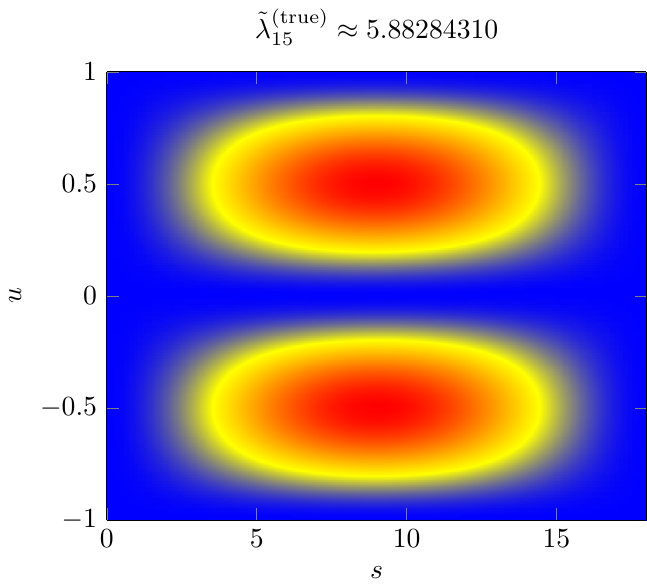}
    \hfill
    \includegraphics[scale=0.7]{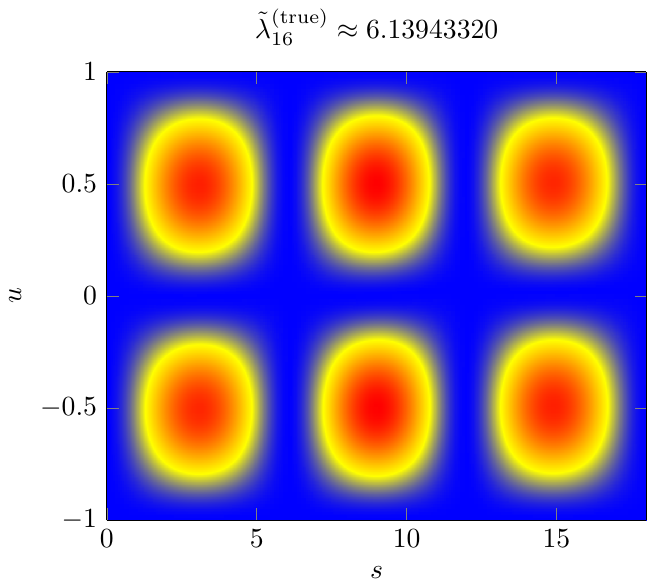}
    \par\medskip
    \includegraphics[scale=0.7]{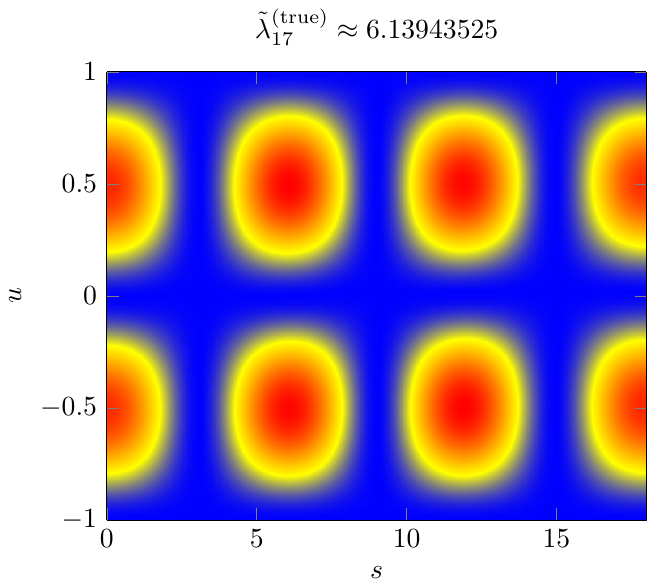}
    \hfill
    \includegraphics[scale=0.7]{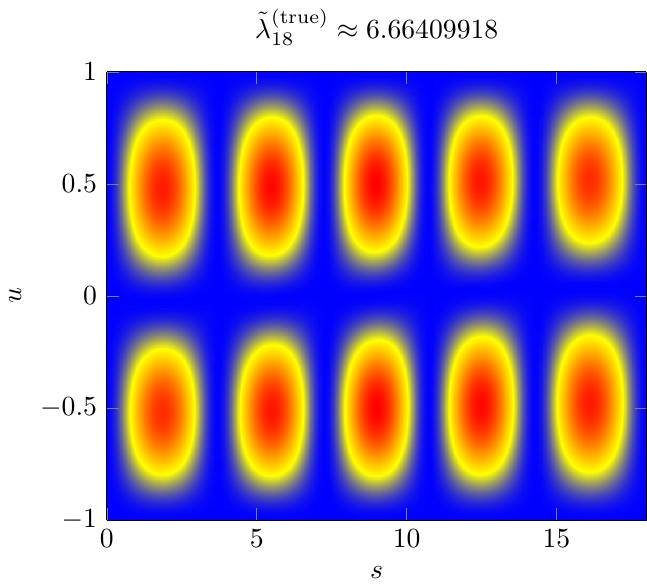}
    \hfill
    \includegraphics[scale=0.7]{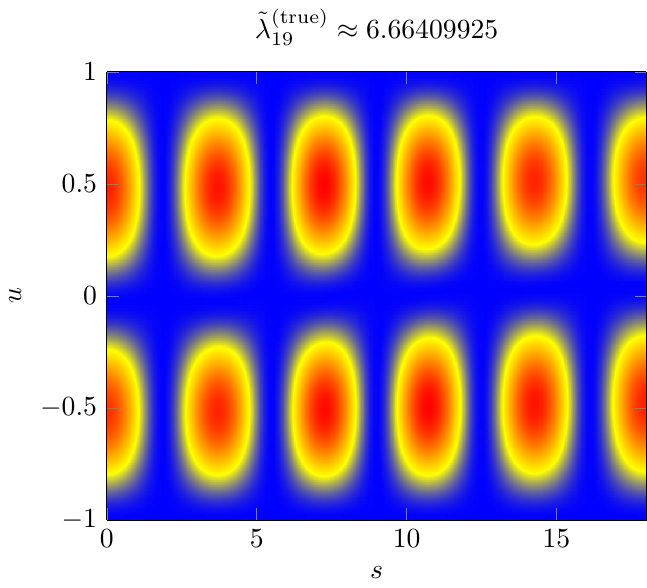}
    \par\medskip
    \includegraphics[scale=0.7]{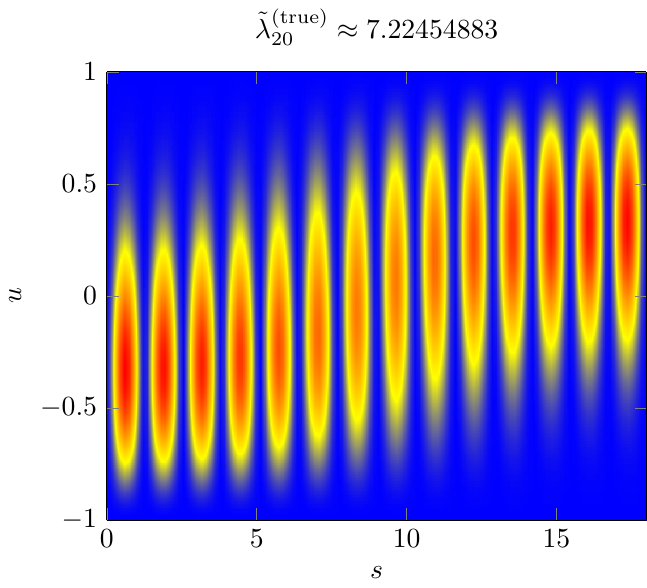}
  \par}
  \caption{
    Numerical approximations $\tilde{f}_k$ of eigenvalues and eigenfunctions of the operator~$L$ for $k=11,12,\ldots,20$.
    We are plotting probability densities $|\tilde{f}_k|^2$, blue and red color corresponds to zero and maximal value, respectively.
    Parameters of the numerical computation are $a = 1.3$, $R = 18/(2\pi)$, and $N = 96$.
  }
  \label{fig.eigenvectors.2}
\end{figure}
\begin{figure}
  \ContinuedFloat*
  {\centering
    \includegraphics[scale=0.7]{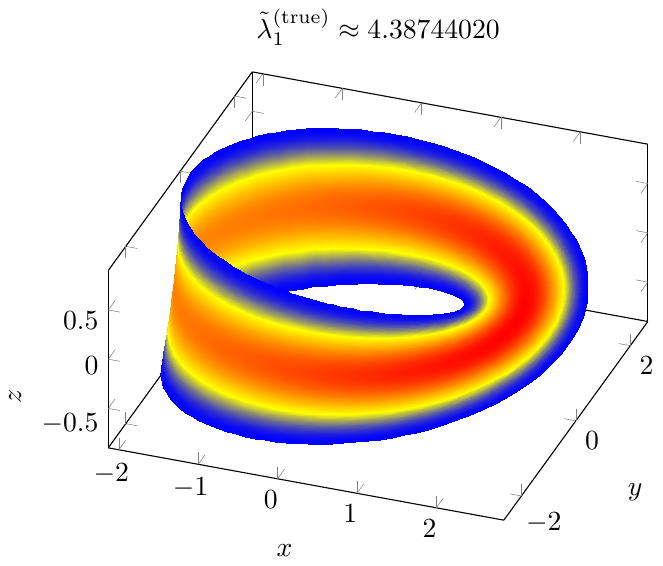}
    \hfill
    \includegraphics[scale=0.7]{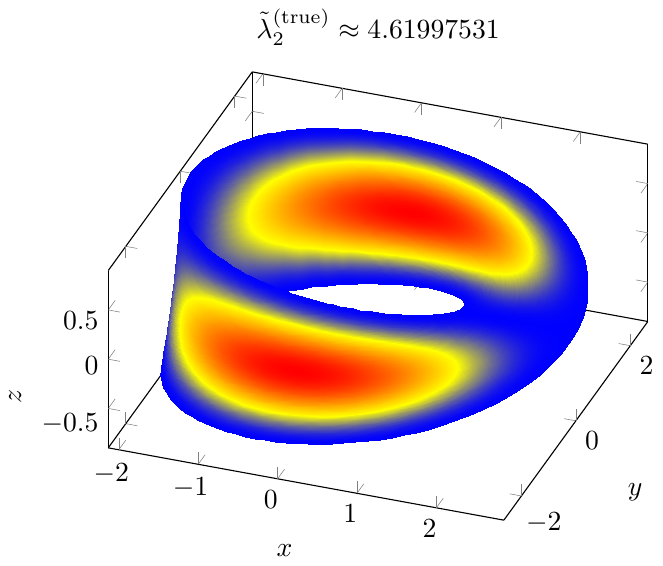}
    \hfill
    \includegraphics[scale=0.7]{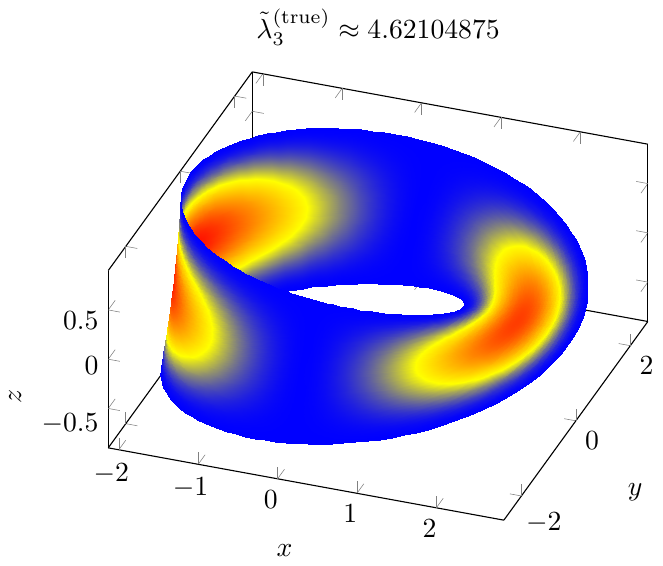}
    \par\medskip
    \includegraphics[scale=0.7]{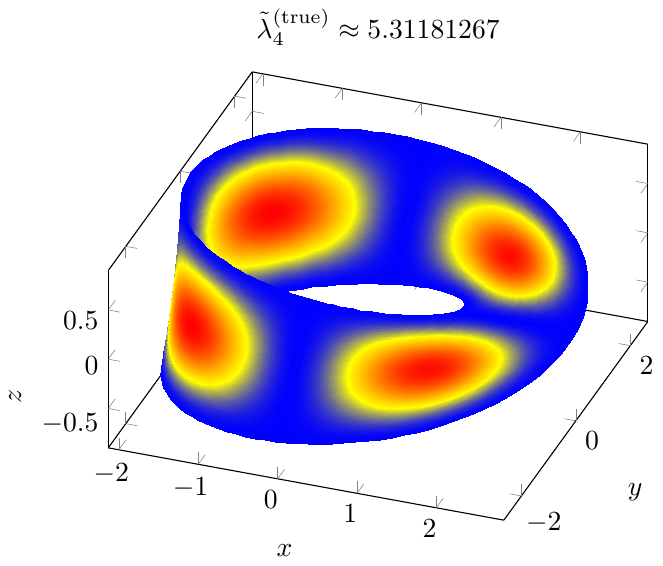}
    \hfill
    \includegraphics[scale=0.7]{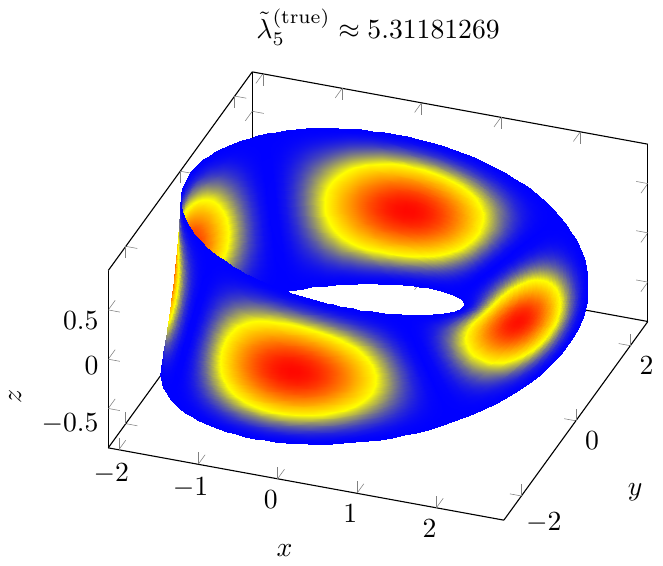}
    \hfill
    \includegraphics[scale=0.7]{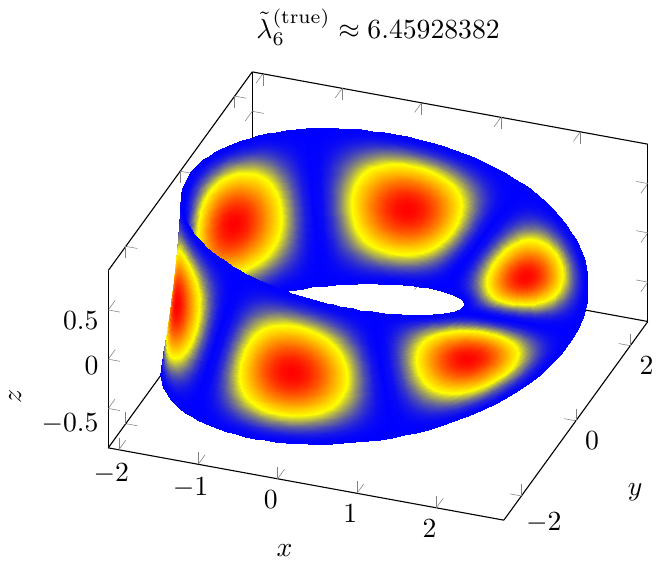}
    \par\medskip
    \includegraphics[scale=0.7]{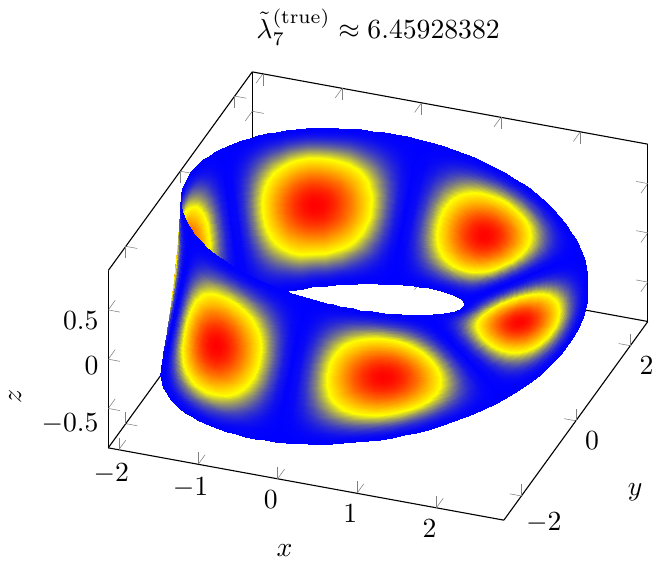}
    \hfill
    \includegraphics[scale=0.7]{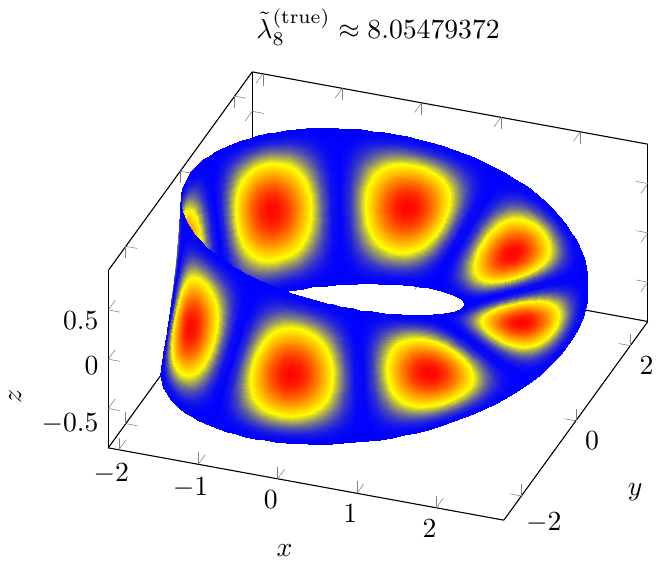}
    \hfill
    \includegraphics[scale=0.7]{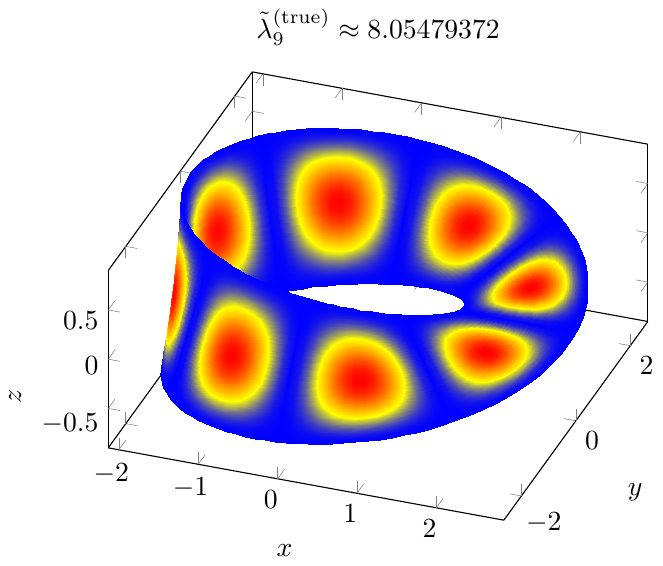}
    \par\medskip
    \includegraphics[scale=0.7]{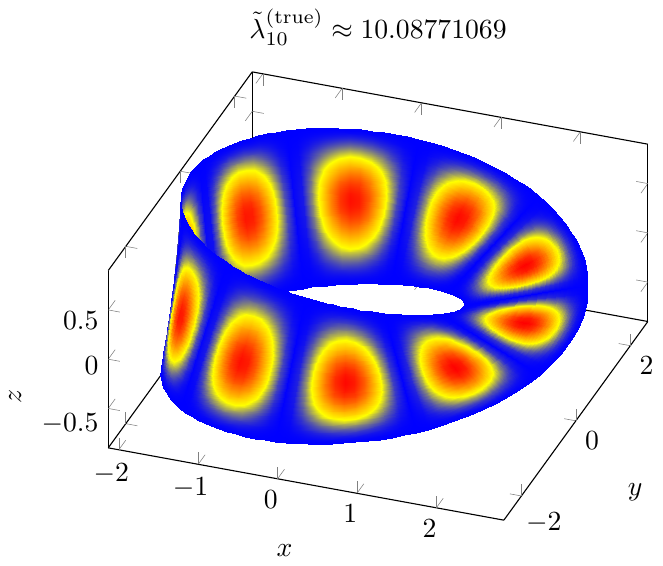}
  \par}
  \caption{
    Numerical approximations $\tilde{f}_k$ of eigenvalues and eigenfunctions of the operator~$L$ for $k=1,2,\ldots,10$ plotted onto the original Möbius strip.
    Parameters of the numerical computation are $a = 0.75$, $R = 13.2/(2\pi)$, and $N = 82$.
    Color coding is the same as in Figure~\ref{fig.eigenvectors.1}.
  }
  \label{fig.eigenvectors.3d.1}
\end{figure}
\begin{figure}
  \ContinuedFloat
  {\centering
    \includegraphics[scale=0.7]{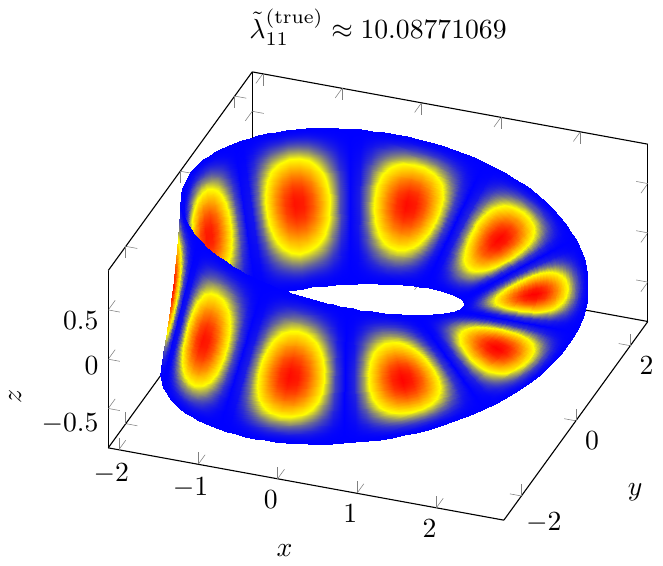}
    \hfill
    \includegraphics[scale=0.7]{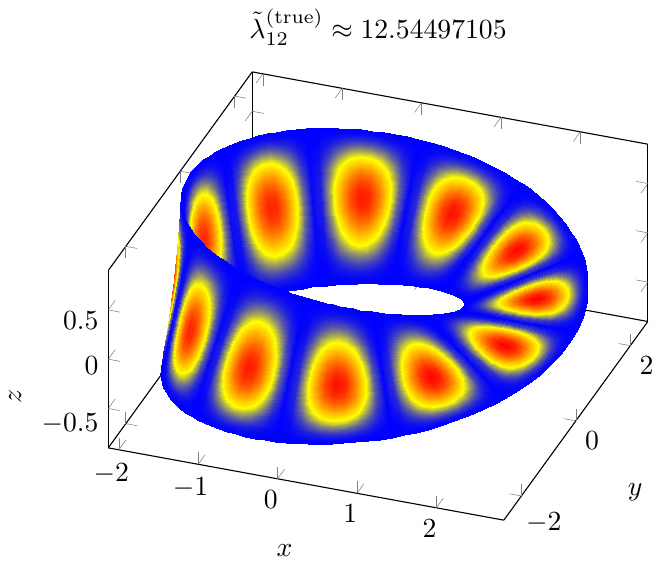}
    \hfill
    \includegraphics[scale=0.7]{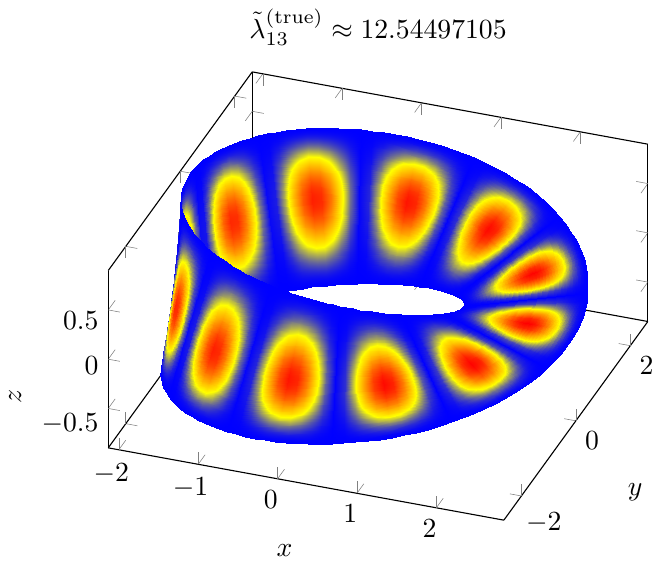}
    \par\medskip
    \includegraphics[scale=0.7]{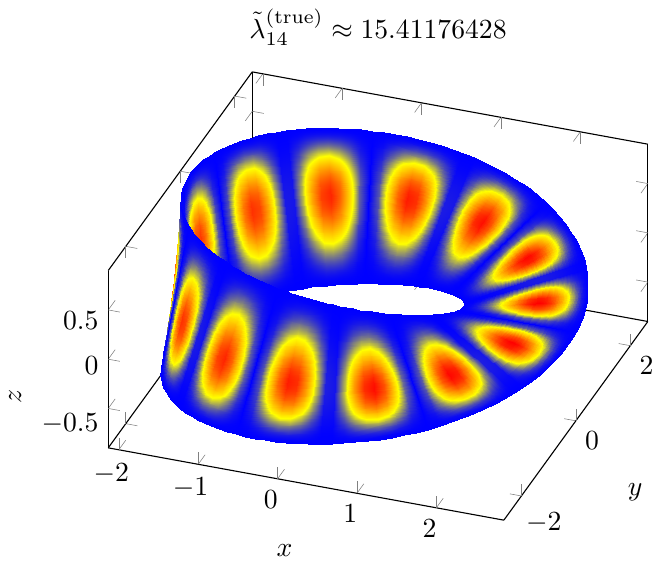}
    \hfill
    \includegraphics[scale=0.7]{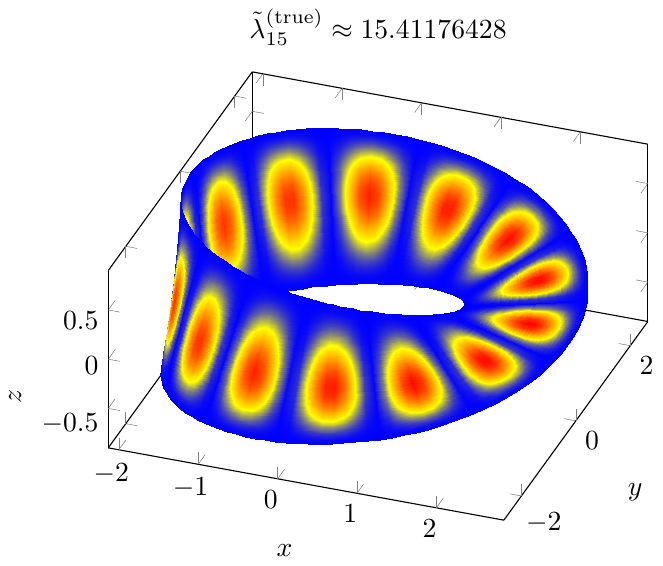}
    \hfill
    \includegraphics[scale=0.7]{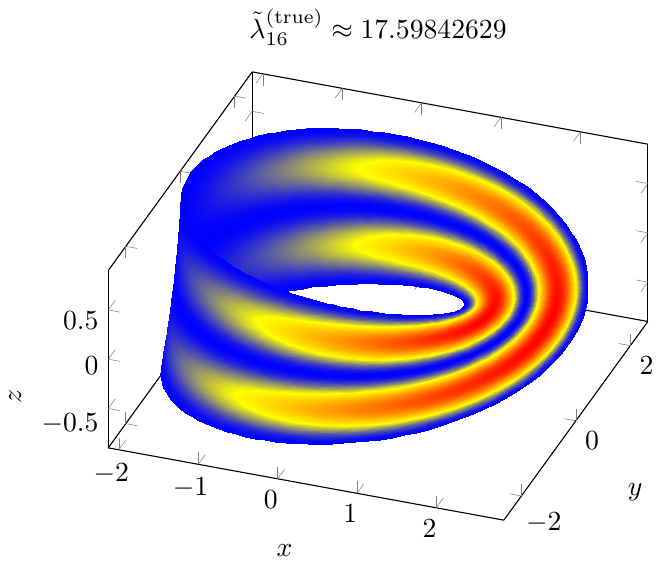}
    \par\medskip
    \includegraphics[scale=0.7]{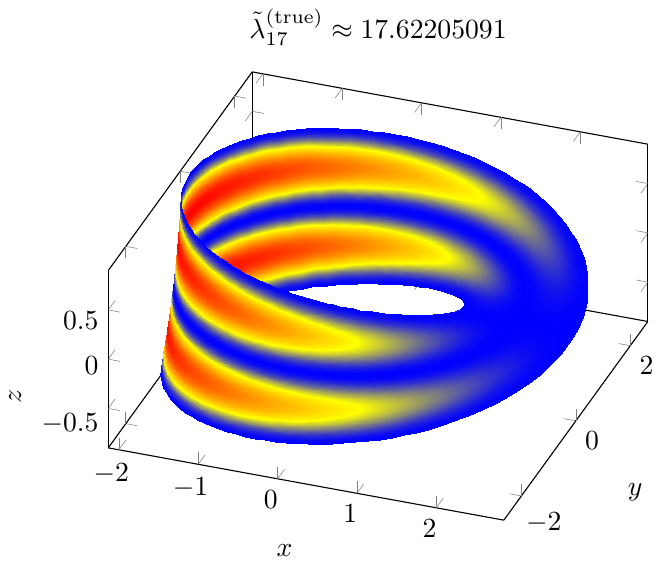}
    \hfill
    \includegraphics[scale=0.7]{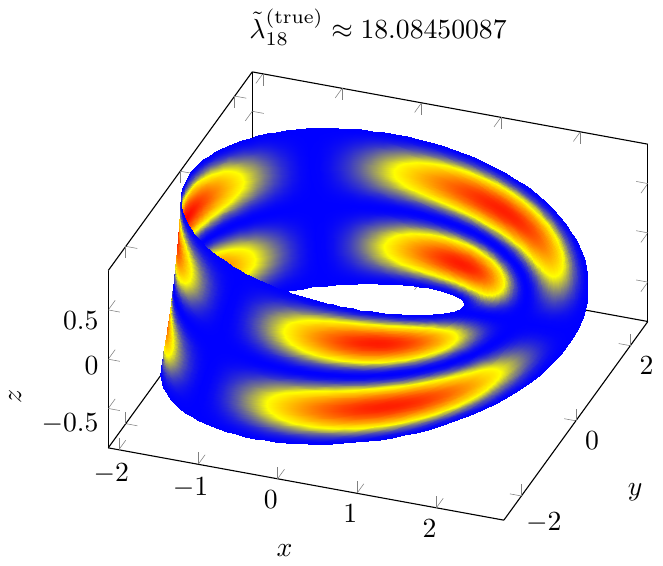}
    \hfill
    \includegraphics[scale=0.7]{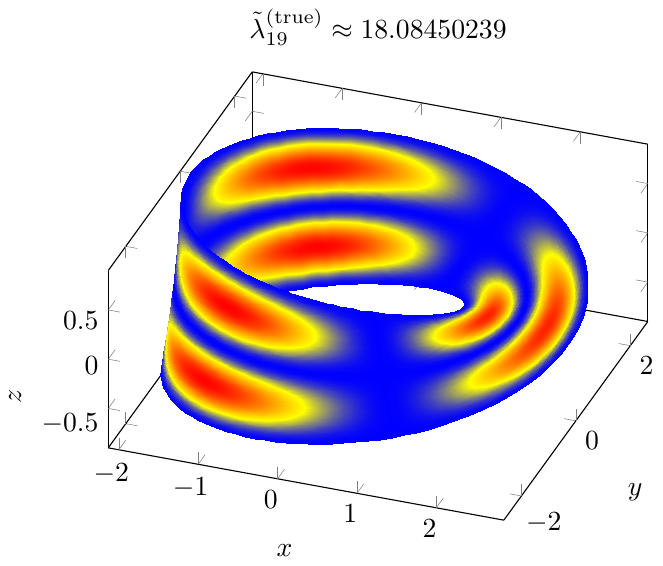}
    \par\medskip
    \includegraphics[scale=0.7]{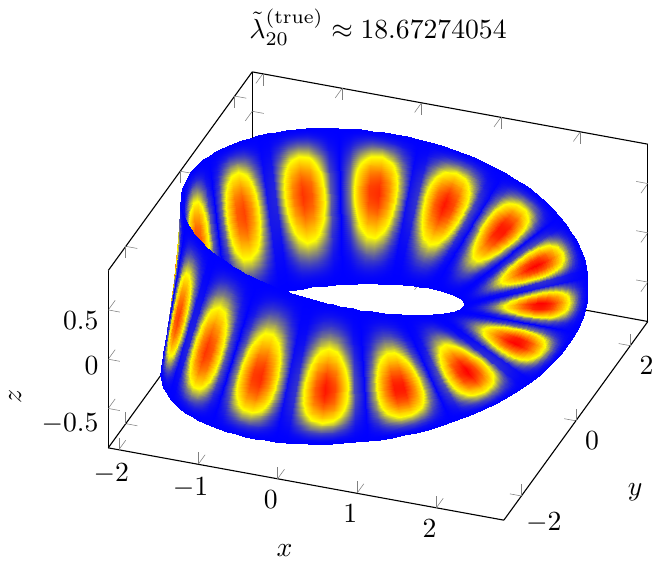}
  \par}
  \caption{
    Numerical approximations $\tilde{f}_k$ of eigenvalues and eigenfunctions of the operator~$L$ for $k=11,12,\ldots,20$ plotted onto the original Möbius strip.
    Parameters of the numerical computation are $a = 0.75$, $R = 13.2/(2\pi)$, and $N = 82$.
    Color coding is the same as in Figure~\ref{fig.eigenvectors.1}.
  }
  \label{fig.eigenvectors.3d.2}
\end{figure}
\begin{figure}
  {\centering
    \includegraphics{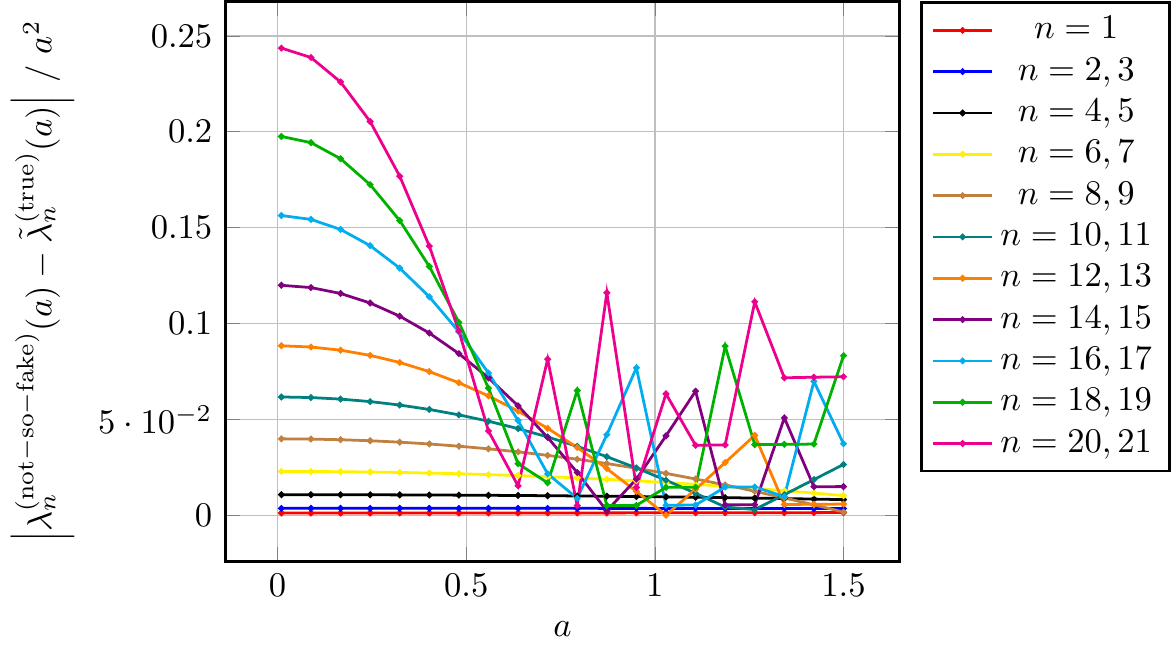}
  \par}
  \caption{
    Ratio of the difference of the $n$th eigenvalue of the not-so-fake model 
    $\lambda_n^{(\textrm{not-so-fake})}(a)$ and numerical approximation of the $n$th eigenvalue of the full model $\tilde{\lambda}_n^{(\mathrm{full})}(a)$ and $a^2$, $n=1,\ldots,20$.
    Values of parameters are $R = 18/(2\pi)$, $a$ ranges from $0.01$ to $1.5$, and $N=72$.
    Each curve, except the one for $n=1$, in fact represents two of these ratios which are indistinguishable in this plot resolution. 
  }
  \label{fig.diff.ratio}
\end{figure}
\begin{figure}
  {\centering
    \includegraphics{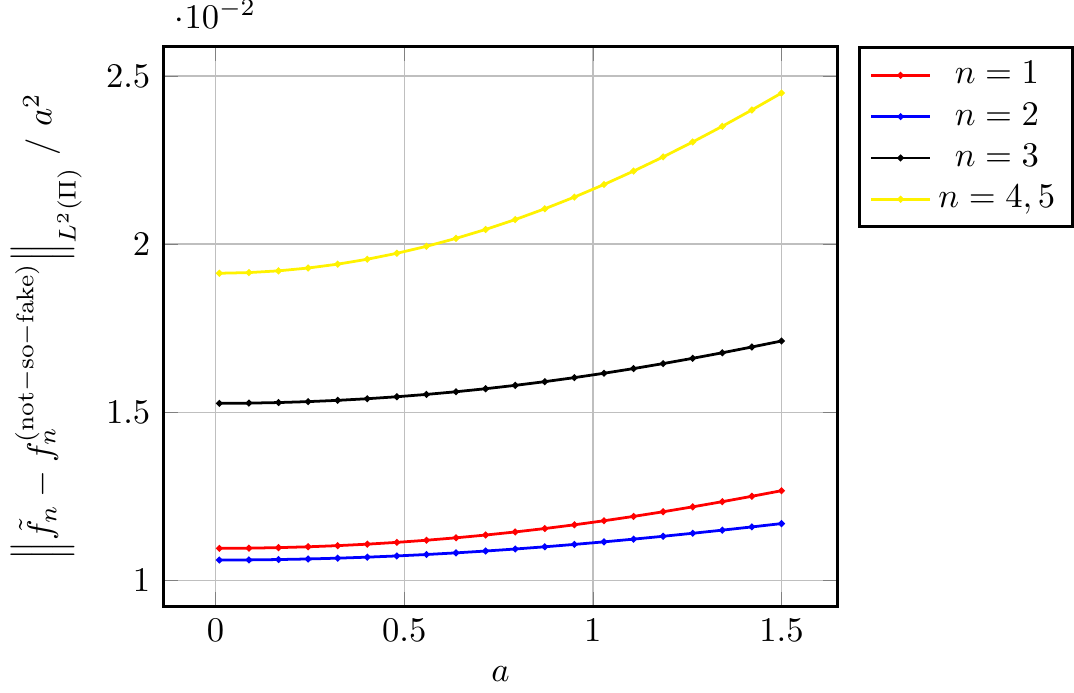}
  \par}
  \caption{
    Ratio of the norm of the difference of the $n$th eigenvector of the not-so-fake model 
    $f_n^{(\textrm{not-so-fake})}$ and numerical approximation of the $n$th eigenvector of the full model $\tilde{f}_n$ and $a^2$, $n=1,\ldots,5$.
    Values of parameters are $R = 18/(2\pi)$, $a$ ranges from $0.01$ to $1.5$, and $N=72$.
  }
  \label{fig.eigenvectors.convergence}
\end{figure}

\end{document}